\newcommand{\Datalog}{\text{\sf Datalog}}
\newcommand{\setof}[2]{\{{#1}\mid{#2}\}}        
\newcommand{\calW}{\mathcal W}
\newcommand{\calM}{\mathcal M}
\newcommand{\calL}{\mathcal L}
\newcommand{\calT}{\mathcal T}
\newtheorem{thm}{Theorem}[section]
\newtheorem{lmm}[thm]{Lemma}
\newtheorem{defn}[thm]{Definition}
\newtheorem{exnum}{Example}
\newtheorem{ex}[exnum]{Example}
\newcommand{\N}{\mathbb N} 
\newcommand{\R}{\mathbb R} 
\newcommand{\cd}{\text{ :- }}
\newcommand{\Name}{\text{\sf Datalog}^\circ}
\newcommand{\trop}{\text{\sf Trop}}
\newcommand{\adom}{\textsf{ADom}}
\begin{document}
\title{Polynomial Time Convergence of the Iterative Evaluation of Datalogo Programs}

\author{Sungjin Im}
\address{UC Merced}
\author{Ben Moseley}
\address{CMU}

\author{Hung Q. Ngo}
\address{RelationalAI, Inc}

\author{Kirk Pruhs}
\address{University of Pittsburg}

\begin{abstract}
 $\Name$ is
an extension of $\Datalog$ that allows for aggregation and recursion
over an arbitrary {\em commutative semiring}.
Like $\Datalog$, $\Name$ programs can be evaluated via the natural iterative algorithm until a fixed point is reached.
However unlike $\Datalog$, the natural iterative
evaluation of
some $\Name$ programs over some semirings may not  converge.
It is known that the commutative semirings for which the iterative evaluation of  $\Name$ programs is guaranteed to converge
are exactly those semirings that are stable~\cite{Khamis0PSW22}.
Previously, the best known upper bound on the number of iterations
until convergence  over $p$-stable semirings is $\sum_{i=1}^n (p+2)^i = \Theta(p^n)$ steps, where $n$ is  (essentially)
the output size.
We establish that, in fact, the
natural iterative evaluation of a $\Name$ program over a $p$-stable semiring converges within a polynomial
number of iterations. In particular our upper bound is
 $O( \sigma p n^2( n^2 \lg \lambda + \lg \sigma))$ where
 $\sigma$ is the number of elements in the semiring present
 in either the input databases or the $\Name$ program, and $\lambda$ is the
 maximum number of terms in any product in the $\Name$ program.
\end{abstract}

\keywords{Datalog, convergence time, semiring}

\maketitle
\section{Introduction}

Motivated by the need in modern data analytics to express recursive computations with
aggregates, Khamis et al.~\cite{Khamis0PSW22} introduced $\Name$, which is
an extension of $\Datalog$ that allows for aggregation and recursion
over an arbitrary {\em commutative semiring}.\footnote{{The results in~\cite{Khamis0PSW22}
are on {\em Partially Ordered Pre-Semirings} (POPS). However, the key convergence properties  are reflected in the {\em core semiring} of the POPS. Thus, it is sufficient to  restrict our attention to semirings for the purpose of this paper.}}
Like $\Datalog$, $\Name$ programs can be evaluated via the natural iterative
algorithm until a fixed point is reached. This is sometimes called the
``na\"ive evaluation'' algorithm.
Furthermore, $\Name$ is attractive for practical applications
because it also allows for a
generalization of semi-na\"ive evaluation to work, under some assumptions
about the semiring~\cite{Khamis0PSW22}. While semi-na\"ive evaluation
makes each iteration faster to compute, the total number of iterations
is the same as that of the na\"ive evaluation algorithm.
Thus, bounding the number of iterations of the na\"ive evaluation is an important
question in practice.

In $\Datalog$, it is easy to see that
the number of iterations until a fixed point is reached is at most the output size. (Every iteration before convergence must derive at least one
new fact, due to monotonicity.)
In contrast, the na\"ive evaluation of $\Name$ programs over some commutative semirings may
not converge. (A simple example is the sum-product semiring over the reals.)

It is known that the commutative semirings for which the iterative evaluation of  $\Name$ programs is guaranteed to converge
are exactly those semirings that are stable~\cite{Khamis0PSW22}.
A semiring is {\em $p$-stable}~\cite{semiring_book} if the  number of iterations required for any one-variable recursive linear $\Name$ program to reach
a fixed point is at most $p$, and a semiring is stable if there exists a $p$ for which it is $p$-stable.
Previously, the best known upper bound on the number of iterations
until convergence  is $\sum_{i=1}^n (p+2)^i = \Theta((p+2)^n)$ steps, where $n$ is  (essentially)
the output size, and $p$ is the stability index of the underlying semiring.
In contrast there are no known lower bounds that show that  iterative evaluation requires an exponential (in the parameter $n$) number of steps to reach convergence.

There are special cases where polynomial convergence rate is known.
The first case is when the semiring is $0$-stable, as in the standard Boolean semiring,
where it is known that na\"ive evaluation converges in $O(n)$ steps~\cite{Khamis0PSW22}.
The second case is when the input program is {\em linear}, meaning that in every rule the
product only contains at most one IDB relational symbol.
In \cite{im2023convergence} it is shown that if the semiring is $p$-stable with $L$ elements in the semiring domain,
then the iterative evaluation of all linear $\Name$ programs converge after $O(\min(p n^3, p n \lg L))$ steps.

\smallskip \textbf{Our Contributions.}
The open problem we address in this paper is whether the iterative evaluation
of $\Name$ programs over $p$-stable semirings
might indeed require an exponential number of steps to converge.  Another way to
frame our motivating research question is whether or not polynomial convergence is a special
property of {\em linear} $\Name$ programs that is not shared by general $\Name$ programs.
Our main finding is stated in Theorem \ref{thm:main}.

\begin{thm}
\label{thm:main}
Let $\bm S$ be a $p$-stable commutative semiring.
Let $P$ be a $\Name$ program where the maximum number of multiplicands
in any product is at most $\lambda$. Let $D$ be the input database instance.
Let $\sigma$  be number of the semiring elements referenced in $P$ or $D$.
Let $n$ denote the total number of ground atoms in an IDB
that at some point in the iterative evaluation of $P$ over $\bm S$ on input $D$ have
a nonzero associated semiring value.
Then the iterative evaluation of $P$ over $\bm S$ on input $D$
    converges within
    $$\lceil  p n(n+3) \cdot (\sigma (n(n+3) / 2) \lg (\lambda+1) + 4\sigma \lg \sigma + 1) \rceil$$
        steps.
\end{thm}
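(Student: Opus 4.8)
The plan is to analyze the naive iteration through its derivation‑tree semantics and to show, using $p$‑stability, that derivation trees above a polynomial height never add anything new to the running $\oplus$‑sum. Write the iteration as $\bm x^{(0)}=\bm 0$ and $\bm x^{(t+1)}=F_P(\bm x^{(t)})$, where $F_P\colon\bm S^n\to\bm S^n$ is the immediate‑consequence operator of the $\Name$ program $P$; since $F_P$ is assembled from $\oplus$, $\otimes$, and the $\sigma$ input constants, it is monotone for the natural preorder $a\preceq b\iff(\exists c)\ a\oplus c=b$, so the chain $\bm x^{(0)}\preceq\bm x^{(1)}\preceq\cdots$ is increasing and the task is to bound when it becomes constant. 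The first step is the standard identity that $x_i^{(t)}$ equals the $\oplus$‑sum, over all derivation trees $\tau$ rooted at the ground atom $i$ of height at most $t$, of the product $\mathrm{val}(\tau)$ of the semiring coefficients attached to the rule instances of $\tau$; each $\mathrm{val}(\tau)$ is a monomial in the $\sigma$ input elements, of total degree bounded by the number of leaves of $\tau$. Convergence by step $T$ is then equivalent to the statement that for every $i$ and every derivation tree $\tau$ rooted at $i$ with $\mathrm{height}(\tau)>T$ one has $\mathrm{val}(\tau)\preceq x_i^{(T)}$, and it suffices to produce such a $T$ of the claimed size.

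The core is a pumping argument driven by $p$‑stability. Along any sufficiently long root‑to‑leaf path of a tall tree some ground atom $j$ repeats, and cutting at two of its occurrences exhibits the tree as $C[\tau']$ where $\tau'$ is again a derivation tree for $j$ and $C$ is a single‑hole context whose effect on a plugged‑in value $v$ is $\mathrm{val}(C)\otimes v$ for a monomial $\mathrm{val}(C)$ — the product of the coefficients on the path to the hole together with the values of all sibling subtrees hanging off that path. Iterating one fixed context $k$ times multiplies in $\mathrm{val}(C)^k$, and $p$‑stability gives $\bigoplus_{k\le p}\mathrm{val}(C)^k=\bigoplus_{k\ge 0}\mathrm{val}(C)^k$, so any pattern repeated \emph{identically} more than $p$ times is already subsumed by its strictly shorter $\le p$‑fold variants. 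The difficulty — and the reason the previously known bound was exponential — is that the contexts met along one long path need not be identical, and a context in a tree carries whole sibling subtrees (unlike the linear case, where $\mathrm{val}(C)$ is a single semiring element), so $\mathrm{val}(C)$ may have enormous degree and there can be exponentially many distinct contexts, making the naive ``repeat‑an‑identical‑pump'' pigeonhole cost $p\cdot n^{\Theta(n)}$. The way around this is to not require identical pumps: commutativity of $\otimes$ lets us reorder and recombine sibling subtrees freely, so we may normalize so that the recursive context between consecutive occurrences of $j$ has height at most $n$; and then, rather than tracking whole monomials, we track for each of the $\sigma$ base elements $a$ only the exponent of $a$. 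Applying $p$‑stability to products of these now height‑bounded contexts shows that once the largest $a$‑exponent attainable for atom $j$ by derivations of a given height has grown past a threshold depending only on $p$ and $n$, no taller derivation can push it higher; and since a height‑$\le n$ context can raise any exponent by at most a factor polynomial in $\lambda^{n}$, that threshold is reached within a number of height‑increments polynomial in $p,n$ and logarithmic in $\lambda$. (Here one also reuses, for the ``solve the linearized system'' substep, the linear‑program bounds of \cite{im2023convergence}.)

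The remaining work is the quantitative accounting that yields $\lceil p\,n(n+3)\cdot(\sigma(n(n+3)/2)\lg(\lambda+1)+4\sigma\lg\sigma+1)\rceil$. One organizes the argument into macro‑rounds of length $\Theta(p\,n^2)$ each, inside which a suitably defined potential — essentially the vector of per‑element exponents of the monomials currently represented across $x_1,\dots,x_n$, which after saturation lives in a set of size at most $(\lambda+1)^{O(\sigma n^2)}\cdot\sigma^{O(\sigma)}$ — must at least double on any round that still makes progress, because a progress‑making tall tree embeds two disjoint height‑halves that themselves already made progress; hence $O(\sigma n^2\lg(\lambda+1)+\sigma\lg\sigma)$ macro‑rounds suffice, and multiplying by the round length and chasing constants gives the displayed bound. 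I expect the second paragraph to be the genuine obstacle: establishing that the branching permitted by general, non‑linear rules — which lets a single pump insert an entire subtree rather than multiply by one semiring element — still cannot delay saturation of each base element beyond polynomially many height‑increments. Once that saturation statement is in hand, translating ``stable height'' into ``number of iterations'' and verifying the explicit constants is routine.
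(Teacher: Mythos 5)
Your proposal has the right starting point and correctly identifies the central obstacle: in nonlinear programs a pumping context carries whole sibling subtrees, so one cannot pigeonhole on ``identical pumps'' without paying an exponential cost; and you correctly see that commutativity plus tracking only the Parikh image (per-letter exponents) is the right way to collapse the context space. The first two paragraphs are thus aligned in spirit with the paper. However, the mechanism you propose for closing the argument does not work as stated, and you partly acknowledge this yourself.

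The specific gap is in the sentence ``Applying $p$-stability to products of these now height-bounded contexts shows that once the largest $a$-exponent attainable for atom $j$ by derivations of a given height has grown past a threshold \dots no taller derivation can push it higher.'' This is not what $p$-stability gives you. The identity $u^{(p)}=u^{(p+1)}$ is an \emph{absorption} statement about aggregated sums; it does not cap the exponent of any individual base element $a$ in any individual monomial, and exponents absolutely do keep growing with depth. What is needed is a way to exhibit, for every sufficiently deep tree $T$, a bounded-size collection of ``pumping directions'' such that the Parikh image of $Y(T)$ is reached from a shallow tree by pumping one of those directions more than $p$ times; then absorption applies to the whole sum. The paper achieves exactly this via two lemmas that your proposal is missing. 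First, a strengthened Parikh theorem shows that $\Psi(L)$ is semi-linear with all offset/basis vectors realized by parse trees and wedges of depth at most $n(n+3)/2$ (hence $1$-norm at most $\lambda^{n(n+3)/2}$), and — crucially — that any parse tree of depth $d$ admits a linear representation of total coefficient $\ge d/(n(n+3)/2)$. Second, a pigeonhole argument on the set of sub-sums $\{\sum_{i\in H}\bm v_i\}$ (using the $1$-norm bound on the basis vectors) shows that any linear representation can be rewritten, preserving the total coefficient, to use at most $h=2(\sigma(n(n+3)/2)\lg(\lambda+1)+4\sigma\lg\sigma)$ basis vectors. Combining these: depth $>q$ forces total coefficient $>ph$, which over $h$ vectors forces some coefficient $>p$, and then the absorption lemma (the generalization of $u^{(p)}=u^{(p+1)}$ to linear sets) finishes the proof.

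Your substitute for this — the ``doubling potential'' argument, where a progress-making tall tree supposedly ``embeds two disjoint height-halves that themselves already made progress,'' and your appeal to the linear-case bounds of the prior work for a ``linearized system'' substep — is not substantiated and does not correspond to a correct inductive structure. Parse trees of context-free grammars do not decompose into two disjoint half-height subtrees in a way that would make the potential double, and nothing in $p$-stability forces the set of reachable exponent vectors to double per macro-round. You also normalize contexts to height $\le n$, which is too optimistic; the paper needs $\Theta(n^2)$ and, more importantly, needs a non-terminal-preserving wedge-removal process so that wedges remain addable in arbitrary combinations (the reversibility invariant), a requirement your sketch does not track. In short: the insight that one should move to Parikh images is right, but the quantitative engine — small-support representations and the depth-to-coefficient lower bound — is absent, and the doubling heuristic that replaces it does not hold up.
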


{
As $\lambda$ is only a property of the $\Name$ program, and $p$ is only a property
of the semiring, they do not scale with the data size.
Thinking of them as constants in data complexity,
the bound in the Theorem \ref{thm:main} is reduced to
$O(n^4 \sigma \lg \sigma)$.
Note that $\sigma$ is bounded by the input size plus the query size and so
it is linear in  the input size under combined
complexity.\footnote{Note that in \cite{Khamis0PSW22} the parameter $\sigma$ denoted the number of references to semiring elements, not the number of semiring elements referenced.} The maximum number of ground IDB atoms is $\tilde O(|D|^k)$ where $k$ is the
maximum arity of IDB atoms, where $\tilde O$ hides query-dependent factors.
Thus, overall, in data complexity Theorem~\ref{thm:main} gives a polynomial bound
on the convergence rate; furthermore, its dependency on the output size makes the
bound more flexible.

\smallskip \textbf{Related Works.}
There is a large body   of research on
fixed points of multi-variate polynomial functions over semirings, which were
studied by many communities since the 1960s.
(See e.g.~\cite{MR1470001,Khamis0PSW22,DBLP:journals/jacm/EsparzaKL10,MR1059930}.)
In some special cases, such as closed semirings or $\omega$-continuous semirings,
there are non-iterative methods to find the
fixpoint~\cite{DBLP:journals/tcs/Lehmann77,DBLP:journals/jacm/EsparzaKL10}.
In general, the iterative algorithm is still the most general.

The two   papers in the literature that we directly build on are~\cite{Khamis0PSW22} and~\cite{im2023convergence}.  In \cite{Khamis0PSW22} it is shown that  the na\"ive evalution of $\Name$ programs converges in
$O((p+2)^n)$ steps; this bound is obtained by showing how to bound the convergence time
for a high dimensional function in terms of the convergence time for a 1-dimensional function.

The paper \cite{im2023convergence} considers {\em linear}
$\Name$ programs, where the grounded ICO is a linear function
$\bm f : S^n \rightarrow S^n$ that can expressed as
$\bm f(x) = A \otimes x \oplus b$ where $A$ is an $n$ by $n$ matrix with entries from the semiring.
Then $f_i^{(q)}(0)$ becomes $\oplus_{W \in \calW^i_q} Z(W)$,
where $\calW^i_q$ is the collection of all walks starting from ground atom $i$ with at most $q$ hops
in the natural complete digraph underlying $A$, and $Z(W)$ is the
product of $a_{uv}$ for every edge $uv \in W$.
The crux of the $O(p n^3)$ upper bound
analysis  in \cite{im2023convergence} was then that any walk $W$ longer than $O(p n^3)$ must contain a
cycle $C$ where all the edges are traversed many times.
The fact that adding $Z(W)$ didn't change the sum followed
from the stability of the semiring element that is the product of the semiring values on the edges in $C$.
Our analysis for the nonlinear case is more involved than the analysis
for the linear cases because
finding the collection of semiring values that will serve the role of the cycle $C$ in the linear case is more involved.
It is interesting to note that, however, the gap between the two cases is
$O(n\sigma \lg \sigma)$.
}

Our main theorem is proved via a strengthening of  Parikh's Theorem~\cite{MR209093}, which we believe is novel\footnote{However, as there are a daunting number of different statements, proofs and extensions  of Parikh's theorem  in the literature~\cite{friendlyParikh}, it is hard to be totally confident about the
novelty of our extension.} and may be of independent
interests.

\smallskip \textbf{Paper Organization.} Section~\ref{sec:background} covers background knowledge required
to understand this result.
Section~\ref{sec:overview} gives a brief technical overview of the
proof of Theorem \ref{thm:main}.
Sections~\ref{sect:Parikh}, ~\ref{sect:smallsupport}, and~\ref{sect:mainproof} give the proof details.
Finally Section~\ref{sec:conclusions} concludes the paper.

\section{Background}
\label{sec:background}

\subsection{Semirings}
A {\em semiring} is a tuple $\bm S = (S, \oplus, \otimes, 0, 1)$ where
$\oplus$ and $\otimes$ are
binary operators on $S$,
$(S, \oplus, 0)$ is a commutative monoid
(meaning $\oplus$ is commutative and associative, and $0$ is the identity for $\oplus$),
$(S, \otimes, 1)$ is a monoid (meaning $\otimes$ is associative, and $0$ is the identity for $\oplus$),
$a\otimes 0 = 0 \otimes a = 0$ for every $a \in S$, and
$\otimes$ distributes over $\oplus$.
$\bm S$ is said to be {\em commutative} if $\otimes$ is commutative.
Define
$$u^{(p)} := 1 \oplus u \oplus u^2 \oplus \cdots \oplus u^{p},$$
where $u^{i} := u \otimes u \otimes \cdots \otimes u$ ($i$ times).
An element $u\in S$ is {\em $p$-stable} if $u^{(p)}=u^{(p+1)}$, and a
semiring $\bm S$ is {\em $p$-stable} if every element $u \in S$ is $p$-stable.

A function $f : S^n \to S^n$ is $p$-stable if $f^{(p+1)}(\bm 0) = f^{(p)}(\bm 0)$, where
$\bm 0$ is the all zero vector, and
$f^{(k)}$ is the $k$-fold composition of $f$ with itself.  The {\em stability index} of
$f$ is the smallest $p$ such that $f$ is $p$-stable.
See \cite{semiring_book} for more background on semirings and stability.

\subsection{Datalog}
A (traditional) $\Datalog$~\cite{DBLP:books/aw/AbiteboulHV95}
program  $P$ consists of a set of rules of the form:
\begin{align}
  R_0(\bm X_0) &\cd R_1(\bm X_1) \wedge \cdots \wedge R_m(\bm X_m) \label{eq:datalog}
\end{align}
where $R_0, \ldots, R_m$ are predicate names (not necessarily distinct) and each $\bm X_i$
is a tuple of variables and/or constants.   The atom $R_0(\bm X_0)$ is called the head, and
the conjunction $R_1(\bm X_1) \wedge \cdots \wedge R_m(\bm X_m)$ is called the body.
Multiple rules with the same head are interpreted as a disjunction.
A predicate that occurs in the head of some rule in $P$ is called an
{\em intensional database predicate} (IDB), otherwise it is called an
{\em extensional database predicate} (EDB).
The EDBs form the input,  and the  IDBs represent the output computed by
the $\Datalog$ program.
The finite set of all constants occurring in an EDB is called the
{\em active domain}, and denoted $\adom$.
An atom $R(\bm X)$ is called a {\em ground atom} if all its arguments are constants.
There is an implicit existential quantifier over the body for all variables
that appear in the body, but not in the head, where the domain of the existential quantifier is $\adom$. Thus, a Datalog program can also be viewed as a collection
of unions of conjunctive queries (UCQs), one UCQ for each IDB.

\begin{ex} \label{ex:TC}
A classic example of a $\Datalog$ program is the  transitive closure program
\begin{align*}
  T(X,Y) &\cd E(X,Y) \\
  T(X,Y) &\cd T(X,Z) \wedge E(Z,Y)
\end{align*}
Here $E$ is an EDB predicate, representing the edge relation of a directed graph,
$T$ is an IDB predicate, and $\adom$ is the vertex set.
Written as a UCQ, where the quantifications are explicit, this program is:
\begin{align}
  T(X,Y) & \cd E(X,Y) \vee \exists_Z \ (T(X,Z) \wedge E(Z,Y)) \label{eq:datalog:intro}
\end{align}

The UCQ format is the right format to work with when extending $\Datalog$ programs
to general semirings.
\end{ex}

A $\Datalog$ program $P$ can be thought of as a function,
called the {\em immediate consequence operator (ICO)},
mapping a subset of ground IDB atoms to a subset of ground IDB atoms.
(The ground EDB atoms are inputs and thus remain constants.)
In particular, the ICO adds a ground (IDB) atom $R(\bm x)$ to the output
if it can be logically inferred by the input ground atoms via the rules of $P$.
The iterative evaluation of a $\Datalog$ program works in rounds/steps,
where on each round the ICO is applied to the current state, starting from the empty
state.

\subsection{Datalogo}
Like $\Datalog$ programs, a $\Name$
program consists of a set of rules, where the UCQs are
replaced by {\em sum-sum-product queries} over a commutative semiring
$\bm S = (S, \oplus, \otimes, 0, 1)$, {where $\vee$ is replaced with
$\oplus$ and $\wedge$ with $\otimes$.
Specifically, in a $\Name program$ each rule has the form:}
\begin{align} R_0(X_0) &\cd \bigoplus R_1(\bm X_1) \otimes
    \cdots \otimes R_m(\bm X_m) \label{eq:t:monomial}
  \end{align}
where   sum is over the active $\adom$ of the variables not in $X_0$.
{Multiple rules with the same head are combined using the
$\oplus$ operation, which is the analog of combining rules using $\vee$
in $\Datalog$.}

Furthermore, each ground EDB or IDB atom is associated with an element of the semiring $\bm S$, and the non-zero elements associated with ground EDB atoms
are specified in the input.
A fixed point solution to the $\Name$ program associates a semiring
element to ground IDB atoms.
Just like in $\Datalog$, we do not have to explicitly represent the
zero-assigned ground IDB atoms: every ground atom not
in the output are implicitly mapped to $0$.

\begin{ex} \label{ex:APSP1}
The $\Name$-version of the $\Datalog$ program given in  line~\eqref{eq:datalog:intro} with
$\vee$ replaced by $\oplus$, $\wedge$ replaced by $\otimes$ and $\exists_Z$ replaced by $\bigoplus_Z$ is
\begin{align}
    T(X,Y) &\cd E(X,Y) \oplus \bigoplus_Z T(X,Z) \otimes E(Z,Y), \label{eqn:linear:tc}
\end{align}
Here $E$ is an EDB predicate, and $T$ is an IDB predicate,
$\oplus$ is the semiring addition operation, $\otimes$ is the semiring multiplication
operation
and $\bigoplus_Z$ is aggregation, that is an iterative application of $\oplus$
over $\adom$.

When interpreted over the Boolean semiring, the $\Name$ program  in~\eqref{eqn:linear:tc}
is the transitive closure program from Example~\ref{ex:TC}.

When interpreted over the {\em tropical semiring} $\trop^+ = (\R_+ \cup
\{\infty\}, \min, +, \infty, 0)$, the $\Name$ program in~\eqref{eqn:linear:tc}
 solves the classic {\em All-Pairs-Shortest-Path} (APSP) problem, which  computes the shortest path length $T(X,Y)$ between all pairs $X,Y$ of
 vertices in a directed graph specified by an edge relation $E(X,Y)$, where the semiring
 element associated with $E(X,Y)$ is the length of the directed edge $(X,Y)$.
  \begin{align}
    T(X,Y) &\cd \min\left(E(X,Y), \min_Z (T(X,Z) + E(Z,Y))\right) \label{eqn:apsp}
  \end{align}
\end{ex}

A $\Name$ program can be thought of as an immediate consequence
operator (ICO). {A simple way to understand the semantics of $\Name$ is
to think of each body predicate $R_i$ in~\eqref{eq:t:monomial} as a
function from the domain of $\bm X_i$ to the domain $S$ of the semiring.
The functional value $R_i(\bm c_i)$ for a particular binding
$\bm c_i$ in the domain of $ \bm X_i$ is the value assigned to
the ground atom $R_i(\bm c_i)$.
The rule~\eqref{eq:t:monomial} is thus exactly a {\em sum-product query}
 \ (or a {\em functional  aggregate query}~\cite{DBLP:conf/pods/KhamisNR16} over one semiring), and multiple rules  with the same head are combined
into a {\em sum-sum-product} query.
The $\Name$ program containing these queries compute new (IDB)
functions from old (IDB and EDB) functions, using the sums and
products from the semiring.

The iterative evaluation of a $\Name$ program works by  initially assigning all IDB ``functions'' to be identically $0$
(i.e. their ground atoms are assigned with $0$).
The ICO is then repeatedly applied to the current IDB state.
In the context of the $\Name$ program in (\ref{eqn:apsp}),
initially all $T(x,z)$ are assigned with $+\infty$ (the $0$ of the tropical
semiring), and the rule~\eqref{eqn:apsp}  effectively is the well-known
Bellman-Ford algorithm~\cite{DBLP:books/daglib/0023376}.
}

A $\Name$ program is  linear if every rule~\eqref{eq:t:monomial} has
no more than one IDB predicate in its body.
The $\Name$ program in Example \ref{ex:APSP1} is linear. While many natural $\Name$ programs are linear,
there are also natural nonlinear $\Name$ programs.

\begin{ex} \label{ex:APSP3}
As a classic example of a nonlinear $\Name$ program, consider the following
alternate formulation of APSP, which is equivalent to~\eqref{eqn:apsp}
   \begin{align}
    T(X,Y) &\cd \min\left(E(X,Y), \min_Z (T(X,Z) + T(Z,Y))\right) \label{eqn:apspnl}
  \end{align}
\end{ex}

The {\em convergence rate} of a $\Name$ program is the stability index of its ICO.

\subsection{Grounding the ICO}
Since the final associated semiring values of the ground IDB atoms are not initially known, it is natural to think of them as (IDB)  variables.
Then the grounded version of the ICO of a $\Name$ program is a map $\bm f : S^n \rightarrow S^n$,
where  $S$ is the semiring domain, and $n$ is the number of ground IDB atoms that ever have a nonzero value at some point in the iterative evaluation of the program.
For instance, in \eqref{eqn:apsp},
there would be one variable  for each pair $(x, y)$ of vertices where there is a directed path from $x$ to $y$ in the graph.
So the grounded version of the ICO of a $\Name$ program has the following form:%
\begin{align}
  X_1 \cd & f_1(X_1, \ldots, X_n) \nonumber\\
          & \ldots \label{eq:grounded:pi} \\
  X_n \cd & f_n(X_1, \ldots, X_n) \nonumber
\end{align}
where the $X_i$'s are the  IDB variables, and $f_i$ is the component of $\bm f$ corresponding to the IDB variable $X_i$.
Note that each component function $ f_i$ is a multivariate polynomial in the IDB variables
of degree at most the maximum
number of factors in any product in the body of some rule~\eqref{eq:t:monomial} in the $\Name$ program.
After
$q$ iterations of the iterative evaluation of a $\Name$ program, the semiring value  associated with the ground atom corresponding to $X_i$
will be:
\begin{align}
{f}_i^{(q)}(\bm 0)
\end{align}

{
\begin{ex}
Consider the binary recursive formulation in~\eqref{eqn:apspnl}, written over a generic semiring.
\begin{align}
    T(X,Y) &\cd E(X,Y) \oplus \bigoplus_Z T(X,Z) \otimes T(Z,Y) \label{eqn:binary:apsp}
\end{align}
Suppose $\adom =\{1, 2, 3, 4\}$, and the input EDB contains ground EDB atoms
$E(1,2), E(2,3), E(3,4)$, with corresponding (constant) semiring values $e_{12}, e_{23}, e_{34}$.
Then there will be 16 equations and 16 variables in the grounded ICO; For each $a,b \in \{1, \ldots, 4\}$ there will
a variable $X_{ab}$, and an equation of
the form:
\begin{align*}
X_{ab} &\cd e_{ab} \oplus \bigoplus_{i\in [4]} X_{ai} \otimes X_{ib}
\end{align*}
But as many  of these variables will always be $0$; they are ``inactive''
and thus they can effectively be ignored from the grounded ICO formulation.
Thus effectively one can think of the grounded ICO as having the following $6$ variables and 6 equations:
\begin{align*}
X_{12} &\cd e_{12}
&& X_{23} \cd e_{23}\\
X_{13} &\cd X_{12} \otimes X_{23}
&& X_{24} \cd X_{23} \otimes X_{34} \\
X_{14} &\cd X_{12} \otimes X_{24} \oplus X_{13}\otimes X_{34}
&& X_{34} \cd e_{34}.
\end{align*}
\end{ex}
}

\subsection{Context Free Languages}
It is convenient to reason about the formal expansion of $f_i^{(q)}(\bm 0)$  using context-free languages (CFL). {See \cite{Kozen} for an introduction to CFLs.}
To explain this, it is probably best to start with a concrete example.

\begin{ex} \label{ex:fundamental} The following map
  $\bm f = (f_1,f_2)$:
    \begin{align}
        \begin{bmatrix}
            A\\ B
        \end{bmatrix}
        \to
        \begin{bmatrix}
            aAB + bB + c \\
            cAB + bA + a
        \end{bmatrix}
    \end{align}
    can be represented by the following context free grammar $G$:
    \begin{align*}
        A \to aAB \ | \ bB \ | \ c &&
        B \to cAB \ | \ bA \ | \ a
    \end{align*}
    \end{ex}

More generally the variables in $f$ become non-terminals in $G$,
the constants in $f$ become the terminals in $G$,
multiplication in $f$ becomes concatenation in $G$,
and addition in $f$ becomes the or operator $\mid$  in $G$.
Given a parse tree $T$ for the grammar,
define the {\em yield} $Y(T)$ of $T$ to be the string of
terminal symbols at the leaves of $T$, and the  {\em product yield} $Z(T)$ to be the product
of the semiring values in $Y(T)$.
Let $\calT^i_q$ denote the set of all parse
trees  with starting non-terminal $X_i$, and depth $\leq q$.
Note that then:
    \begin{align}
        f_i^{(q)}(0) &= {\bigoplus}_{T \in \calT^i_q} Z(T). \label{eqn:YT}
    \end{align}
That is, the value of the semiring value associated with a IDB variable
$X_i$ after $q$ iterations is
the sum of the product of the leaves of parse trees of depth at most $q$ and rooted at $X_i$.

\subsection{Parikh's Theorem}
The upper bound on the time to convergence for iterative evaluation of $\Name$ programs
in ~\cite{Khamis0PSW22} essentially relied  on black-box application of Parikh's theorem~\cite{MR209093}.  {See \cite{friendlyParikh} for an introduction to Parikh's theorem.}
The {\em Parikh image} of a word $w \in \Sigma^*$, denoted by $\Psi(w)$, is the vector
$\Psi(w) = (k_1,\dots,k_\sigma) \in \N^\sigma$ where $k_i$ is the number of occurrences of the letter $a_i \in \Sigma$ that occur in the word
$w$ (So $|\Sigma| = \sigma$). {Similarly, for a language $L$, define $\Psi(L) := \{ \Psi(w) \; | \; w \in L \}$.}
Then using our assumption that the underlying semiring $\bm S$ is commutative,
we observe that
\begin{align}
    f_i^{(q)}(\bm 0) &= {\bigoplus}_{T \in \calT^i_q} ~ Z(T)
            = {\bigoplus}_{T \in \calT^i_q} ~ {\bigotimes}_{j=1}^\sigma ~  a_j^{\Psi_j(Y(T))}
\label{eqn:YT2}
\end{align}
where $\Psi_j(w)$ is component $j$ of the Parikh image.
One version of Parikh's theorem~\cite{MR209093} states that the Parikh images of the words
in a context free language  forms a semi-linear set.
A set  is then {\em semi-linear} if it is a finite
union of linear sets.
A set $\calL \subseteq \N^\sigma$ is said to be {\em linear} if there exist offset vector $\bm v_0$ and basis vectors $ \bm v_1, \ldots, \bm v_\ell \in \N^\sigma$ such that $\calL$ is the span  of these vectors, that is if:
\begin{align*}
  \calL = & \setof{\bm v_0 + k_1 \bm v_1 + \ldots + k_\ell \bm v_\ell}{k_1, \ldots, k_\ell \in \N}.
\end{align*}
{Here we assume that $0 \in \N$.}
If a vector $$\bm v = \bm v_0 + k_1 \bm v_1 + \ldots + k_\ell \bm v_\ell$$
then we say $(k_1, \ldots, k_\ell)$ is a linear representation of $v$ within $\mathcal L$.

The textbook  proof of Parikh's theorem (see \cite{Kozen}) uses what we will call a wedge.
A wedge within a parse tree $T$ can be specified by identifying  two internal nodes in the parse tree that correspond to the same nonterminal, say $A$, and that have an ancestor-descendent relation. The  corresponding wedge $W$ then consists of
the nodes in the parse tree that are descendents of the top $A$, but not the bottom $A$.
See Figure \ref{fig:wedge}.

\begin{figure}[th]

\centering
\includegraphics[width=0.27\textwidth]{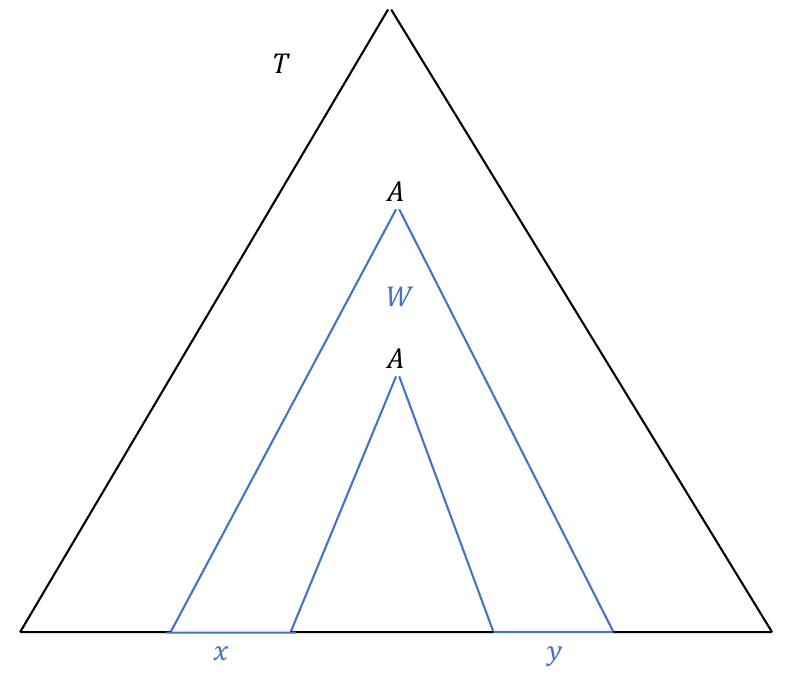}
\caption{Illustration of a wedge $W$}
\label{fig:wedge}
\end{figure}

\section{Technical Overview}
\label{sec:overview}

We now give a technical overview of the
 proof of Theorem \ref{thm:main}. This proof has three logical parts. The first  part, which is stated in
 Theorem \ref{thm:depth-parikh},
 strengthens some aspects of the standard statement of Parikh's theorem~\cite{Kozen}.

\begin{thm}
    \label{thm:depth-parikh}

  Let  $L$ be a context free language generated by a grammar $G= (N, \Sigma, R, S)$, where $N$ is the collection of nonterminals, $\Sigma$ is
    the collection of terminals, $R$ is the collection of rules, and $S \in N$ is the
    start non-terminal. Let $n$ be the cardinality of $N$ and let
    $\lambda$ be the maximum number of symbols on the righthand side of any rule.
 Then
        there exists a finite semi-linear set $\mathcal{M}$ with the following properties:
    \begin{enumerate}
        {\item $\mathcal M = \Psi(L)$.}

        \item Every linear set
        $\mathcal{L} \in \mathcal{M}$ has an associated offset vector $v_0$ and basis vectors $v_1, \ldots v_{\ell}$,
       with the properties that :
        \begin{enumerate}
            \item There is a word $w \in L$ such   $\Psi(w) = v_0$ and $w$ can be generated by a parse tree  with depth at most  $ n(n+3) / 2$.
            \item Each basis vector has an associated wedge with depth at most  $ n(n+3) / 2$.
            \item The 1-norm of the offset vector, and each basis vector, is at most $\lambda^{n(n+3) / 2}$.

            \item For each vector $\bm v = \bm v_0 + k_1 \bm v_1 + \ldots + k_\ell \bm v_\ell$ in the span of $\mathcal{L} \in \mathcal{M}$,
            there is a word $w \in L$, with $\Psi(w) = \bm v$, where $w$ can be generated by a parse tree  with depth at most $ (k+1) n(n+3) / 2$, where $k = k_1 + k_2 + \ldots + k_\ell$.
   \end{enumerate}
    {
    \item For any parse tree $T$ of depth $d$ such that $Y(T)  \in L$, there exists a linear representation of $\Psi(Y(T)) =  \bm v_0 + k_1 \bm v_1 + \ldots + k_\ell \bm v_\ell$ such that $1 + k_1 + k_2 + \ldots + k_\ell \geq d / (n(n+3)/2)$.
}

    \end{enumerate}

\end{thm}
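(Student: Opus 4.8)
The plan is to establish all four assertions at once, from a single structural decomposition of parse trees that refines the textbook wedge argument for Parikh's theorem by tracking depths. Write $g(m):=m(m+3)/2$, and note the telescoping identity $g(m)=g(m-1)+(m+1)$. Recall that \emph{removing} a wedge $W$ from a parse tree --- pick an ancestor $u$ and a descendant $v$ carrying the same nonterminal $A$ and splice the subtree rooted at $v$ into the position of $u$ --- subtracts exactly $\Psi(W)$ from the Parikh image of the yield, where $\Psi(W)$ counts the terminal leaves lying in $W$; conversely, any wedge $W$ realizing a derivation $A\Rightarrow^* \alpha A\beta$ may be \emph{inserted} at any node labelled $A$, adding $\Psi(W)$ to the Parikh image. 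Call a parse tree $T$, with $U$ the set of nonterminals occurring in it, a \emph{core tree} if no wedge can be removed from $T$ without deleting some nonterminal of $U$ from the tree; and call a wedge \emph{internal to} a set $U$ of nonterminals if its top/bottom label and every nonterminal occurring in it lie in $U$. I then take $\mathcal M$ to be the family of linear sets $\mathcal L_{T_0}$, one for each core tree $T_0$ rooted at $S$, where $\mathcal L_{T_0}$ has offset $\Psi(Y(T_0))$ and basis $\{\,\Psi(W)\mid W \text{ internal to the nonterminal set of } T_0,\ \mathrm{depth}(W)\le g(n)\,\}$.

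The technical heart --- and the step I expect to be the main obstacle --- is the depth bound: every core tree, and every \emph{minimal} wedge (one admitting no removable sub-wedge without deleting a nonterminal), using a set of $m$ nonterminals has depth at most $g(m)\le g(n)$. I would prove this by induction on $m$, matching the recursion $g(m)=g(m-1)+(m+1)$. Fix a longest root-to-leaf path $P$. The first ingredient is a bound on how often a single nonterminal occurs on $P$: if $A$ occurs $r+1$ times on $P$, the $r$ wedges between consecutive occurrences have pairwise disjoint node sets, and by the core property removing each of them deletes a nonterminal that occurs \emph{only} inside that wedge; these $r$ deleted nonterminals are therefore distinct and none equals $A$. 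A careful charging that additionally exploits that each such deleted nonterminal lies strictly below the corresponding occurrence of $A$ --- so its ``debut'' on the relevant spine happens later --- together with the fact that subtrees of core trees are again core trees (and that a minimal wedge, viewed as a parse tree with a distinguished $A$-labelled hole, is handled by the same argument), is what I expect to drive the $g(m)=g(m-1)+(m+1)$ recursion; getting the constant down to exactly $g(m)$ rather than the crude $m^2$ is the delicate part. Granting this bound, $\mathcal M$ is a \emph{finite} semilinear set: there are finitely many parse trees of depth $\le g(n)$, hence finitely many core trees, and each contributes only finitely many basis vectors (wedges of depth $\le g(n)$). Parts 2(a)--2(c) are then immediate: the offset is witnessed by $T_0$ itself, of depth $\le g(n)$; every basis vector comes with a wedge of depth $\le g(n)$ by construction; and a parse tree or wedge of depth $d$ with right-hand sides of length $\le\lambda$ has at most $\lambda^{d}$ leaves, so the $1$-norm of the offset and of each basis vector is at most $\lambda^{g(n)}=\lambda^{n(n+3)/2}$.

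For part 2(d), given $\bm v=\bm v_0+k_1\bm v_1+\cdots+k_\ell\bm v_\ell\in\mathcal L_{T_0}$, I would start from $T_0$ and, for each $i$, insert $k_i$ copies of the wedge witnessing $\bm v_i$ at nodes carrying its top label (such nodes exist since that label lies in the nonterminal set of $T_0$). By commutativity of $\bm S$ the Parikh images add, so the resulting parse tree yields a word of $L$ with Parikh image $\bm v$; and since each insertion extends the longest path by at most the inserted wedge's depth, which is $\le g(n)$, after $k=\sum_i k_i$ insertions the depth is at most $(k+1)g(n)$, as required. This also gives the inclusion $\mathcal M\subseteq\Psi(L)$ of part~1.

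Finally, the inclusion $\Psi(L)\subseteq\mathcal M$ and part~3 both come from the reverse process. Given any parse tree $T$ rooted at $S$, of depth $d$, with $Y(T)\in L$, and with $U$ its set of nonterminals, repeatedly remove a wedge whose removal preserves $U$ and which has no sub-wedge with that property, until none remains; the result is a core tree $T_0$ with nonterminal set $U$, so $\mathrm{depth}(T_0)\le g(n)$. Each removed wedge is minimal and internal to $U$, hence (by the depth bound) is a basis vector of $\mathcal L_{T_0}\in\mathcal M$, so $\Psi(Y(T))=\Psi(Y(T_0))+\sum_j\Psi(W_j)$ is a linear representation inside $\mathcal L_{T_0}$ using $s$ basis vectors, where $s$ is the number of removals. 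A short lemma --- obtained by tracing the effect of one wedge removal on the longest root-to-leaf path --- shows that each removal decreases the tree depth by at most the depth of the removed wedge, which is $\le g(n)$; hence $d\le\mathrm{depth}(T_0)+s\cdot g(n)\le(s+1)g(n)$, i.e. $1+\sum_j k_j=s+1\ge d/g(n)=d/(n(n+3)/2)$, which is exactly part~3 (and the case $d\le g(n)$ of part~1's remaining inclusion is witnessed directly by $T_0$).
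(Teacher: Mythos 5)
Your construction mirrors the paper's quite closely: the same bound $c = n(n+3)/2$, the same idea that offset vectors come from bounded-depth parse trees and basis vectors from bounded-depth wedges internal to the tree's nonterminal set, the same forward (wedge-insertion) and reverse (wedge-removal) arguments with the crucial invariant that removal must preserve the nonterminal set, and the same telescoping identity for the depth budget. The only organizational difference is that you index linear sets by \emph{core trees} (trees admitting no nonterminal-preserving wedge removal) while the paper indexes by \emph{all} parse trees of depth at most $c$, which is a harmless variation.

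However, the central technical claim is left unproved, and you say so yourself. The paper's entire argument hinges on one lemma: every parse tree of depth greater than $c$ contains a \emph{good pair} $(A',A'')$, i.e.\ a wedge whose removal preserves the nonterminal set \emph{and} whose depth is at most $c$. The paper proves this by induction on $n$: take a deepest leaf $v$, walk toward the root, locate the first repeated nonterminal at height at most $n+1$ by pigeonhole; if that wedge is not removable (it isolates a unique nonterminal), delete the entire subtree rooted at the upper occurrence, drop to $\le n-1$ nonterminals with residual depth $\ge g(n-1)$, and recurse. Your sketch gestures at a different charging argument (counting occurrences of each nonterminal along a longest path and assigning distinct deleted nonterminals to disjoint spine wedges), but you explicitly write that getting the recursion $g(m)=g(m-1)+(m+1)$ to close, rather than a crude $O(m^2)$, is ``the delicate part.'' That is precisely the part that needs to be written out, since the exact constant $n(n+3)/2$ is what the theorem asserts and what downstream bounds depend on.

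A second, subtler gap is the assertion that \emph{minimal} removable wedges have depth $\le c$ ``by the same argument.'' Two problems. First, the paper never needs this notion: its good-pair lemma directly yields a removable wedge of depth $\le c$, so it can remove \emph{that} wedge at each step and never has to reason about minimality. By introducing minimal wedges you have taken on a separate proof obligation that the paper avoids. Second, the claim is not as innocent as it looks: a ``sub-wedge'' $W(B',B'')$ of $W(A',A'')$ need not be a subset of $W(A',A'')$ as a node set (this fails exactly when $B'$ is an ancestor of $A''$ but $B''$ is incomparable to $A''$), and for such sub-wedges removability does not follow from removability of $W(A',A'')$. Consequently, a wedge can be minimal in your sense yet still contain repeated nonterminals on root-to-leaf paths that branch off the $A'$--$A''$ spine, so the naive pigeonhole bound on its depth does not go through. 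You would need a genuinely new argument here, and none is given. My suggestion is to drop the minimal-wedge machinery entirely and instead prove the paper's good-pair lemma; once you have a removable wedge of depth $\le c$ at every step, both the reverse direction of Property 1 and Property 3 fall out directly, as in the paper.
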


The proof of Theorem \ref{thm:depth-parikh} is
given in Section \ref{sect:Parikh}.
The most important way that Theorem   \ref{thm:depth-parikh} extends the
standard version of Parikh's theorem is property 2(d), which upper bounds
the depth of some parse tree of a word
by the 1-norm of the representation of that word.   The bound given in the textbook proof~\cite{Kozen} of
Parikh's theorem gives a depth bound that is exponentially large.
To achieve property 2(d), our proof contains a constructive forward process $\mathcal P$ that
creates a linear set $\mathcal L \in \mathcal M$
from the parse tree $T$ of some word $w \in L$ by removing wedges from $T$. We are careful to design $\mathcal P$
so that it  is reversible; that is, to recover a parse tree
from a linear representation we can just reverse the process $\mathcal P$.
To accomplish this we need that in the forward process every wedge that is removed does not remove any
nonterminal from the parse tree. Our process $\mathcal P$ ensures that the parse tree for
the offset and the wedges for the basis vectors have depth $O(n^2)$.

The second part of the proof of Theorem \ref{thm:main} is stated in Lemma \ref{lemma:smallrep},
which states that for every word $w \in L$ there exists a linear set $\mathcal L$ in $\mathcal M$ such that
$\Psi(w)  \in \mathcal L$ and $\Psi(w)$ has a linear representation with respect to the basis vectors of $\mathcal L$ that has   small support.

\begin{lmm}
\label{lemma:smallrep}
Let $L$ be an arbitrary context free language. Let $h= 2(\sigma (n(n+3) / 2) \lg (\lambda  +1) + 4\sigma \lg \sigma)$.
Let  $\mathcal M$ be the semilinear set that is guaranteed to exist in Theorem \ref{thm:depth-parikh}.
Let  $w$ be a word in $  L$. Let   $\mathcal L$ in $\mathcal M$ be a linear set such that
$\Psi(w) \in \mathcal L$.  Let
$\Psi(w) = \bm v_0 + k_1 \bm v_1 + \ldots + k_m \bm v_m$ be a linear representation of $\Psi(w)$ with respect to the offset and basis vectors of $\mathcal L$.
Then there exists another linear representation
$\Psi(w) = \bm v_0 + k'_1 \bm v'_1 + \ldots + k'_h \bm v'_h$ of $\Psi(w)$ with respect to $h$ basis vectors of $\mathcal L$ such that $\sum_{i=1}^m k_i = \sum_{i=1}^h k'_i $.
\end{lmm}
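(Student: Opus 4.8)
The plan is to reduce the problem to a statement about non-negative integer solutions of a system of linear equations, and then to apply a Carathéodory-type / sparsification argument. Fix the linear set $\mathcal L$ with offset $\bm v_0$ and basis vectors $\bm v_1, \ldots, \bm v_m$. Given the representation $\Psi(w) = \bm v_0 + k_1 \bm v_1 + \ldots + k_m \bm v_m$, write $K = \sum_{i=1}^m k_i$ and $\bm u = \Psi(w) - \bm v_0 \in \N^\sigma$. The key observation is that we are not free to change the multiset of basis vectors we use arbitrarily, but we \emph{are} free to pick any other non-negative integer combination of the $\bm v_i$'s that (i) sums to the same vector $\bm u$ and (ii) uses the same total multiplicity $K = \sum k'_i$. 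Since by Theorem \ref{thm:depth-parikh}(2c) every basis vector has $1$-norm at most $\lambda^{n(n+3)/2}$, the constraint $\sum_j (\bm v_j)_\ell = u_\ell$ for each coordinate $\ell$, together with $\sum_j$ (coefficient) $= K$, is exactly the question of whether a given point lies in the cone/affine span, and we want a sparse witness.

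The main step is the following sparsification: consider the polytope (or rather integer program) of all $(k'_1, \ldots, k'_m) \in \N^m$ with $\sum_i k'_i \bm v_i = \bm u$ and $\sum_i k'_i = K$. This is a system of $\sigma + 1$ linear equations. A standard integer-Carathéodory / Steinitz-type argument shows that from any feasible solution one can find a feasible solution whose support size is bounded in terms of $\sigma$, the bit-length $\lg(\lambda + 1) \cdot n(n+3)/2$ of the entries of the $\bm v_i$, and $\lg$ of the right-hand side. Concretely, I would first bound the right-hand side: $\|\bm u\|_1 \le \|\Psi(w)\|_1 + \|\bm v_0\|_1$, but more usefully, since $\Psi(w) = \bm v_0 + \sum k_i \bm v_i$ and each $\|\bm v_i\|_1 \le \lambda^{n(n+3)/2}$, we get $\|\bm u\|_1 \le K \cdot \lambda^{n(n+3)/2}$, which is bounded in terms of $K$ and the same parameters. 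Then the number of distinct basis-vector ``types'' available is also controlled: there are at most $(\lambda^{n(n+3)/2} + 1)^\sigma$ vectors in $\N^\sigma$ of $1$-norm at most $\lambda^{n(n+3)/2}$, so $\lg$ of that count is $O(\sigma \cdot n(n+3)/2 \cdot \lg(\lambda+1))$. The target bound $h = 2(\sigma (n(n+3)/2)\lg(\lambda+1) + 4\sigma \lg \sigma)$ is exactly of this shape, so the plan is to show that any feasible $(k'_i)$ can be replaced by one supported on at most $h/2$ coordinates while preserving both $\sum k'_i \bm v_i$ and $\sum k'_i$, and then pad with zero coefficients (or split) up to exactly $h$ basis vectors; since adding a basis vector with coefficient $0$ keeps both invariants, the final count of exactly $h$ is achieved trivially once we have \emph{at most} $h$.

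For the sparsification itself I would argue as follows: if $(k'_i)$ uses more than some threshold $t$ of nonzero coordinates, then among the vectors $(\bm v_i, 1) \in \N^{\sigma+1}$ used, more than $\sigma + 1$ are linearly dependent over $\Q$, hence there is a nonzero rational vector $\bm c$ supported on those coordinates with $\sum_i c_i (\bm v_i, 1) = \bm 0$; moving along $\pm \bm c$ until some coordinate hits $0$ reduces the support while staying feasible over the rationals, and then a standard rounding/iteration argument (this is where one pays the logarithmic factors in the entry sizes and the right-hand side) restores integrality. The cleanest route is probably to invoke an off-the-shelf sparse-integer-solution bound (e.g., the Eisenbrand--Shmonin style bound that an integer program $\{x \in \N^m : Ax = b\}$ with $A$ having entries of absolute value $\le \Delta$ in $d$ rows has a solution with support $O(d \lg(d\Delta))$), applied to the $(\sigma+1)$-row system with $\Delta \le \lambda^{n(n+3)/2}$. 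The hard part — and the step I'd expect to need the most care — is making the multiplicity constraint $\sum k'_i = K$ genuinely part of the linear system (so it is preserved exactly, not just approximately) while still landing inside the target bound $h$; this forces the extra ``$+1$'' row and is why the exponent in Theorem \ref{thm:depth-parikh}(2d) involves $k_1 + \cdots + k_\ell$ rather than the support size alone. Everything else is routine bookkeeping with the bounds from Theorem \ref{thm:depth-parikh}.
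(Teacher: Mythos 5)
Your proposal is correct in spirit and takes a genuinely different, though related, route from the paper. The paper's proof is elementary and self-contained: it pigeonholes on distinct \emph{subset sums}, showing $|\{\sum_{i\in H}\bm v_i : \emptyset\neq H\subseteq[m]\}| < 2^m$ once $m>h$ (using the 1-norm bound from Theorem~\ref{thm:depth-parikh}(2c) and a count of lattice points of bounded 1-norm), extracts two incomparable sets $H_1\neq H_2$ with $\sum_{i\in H_1}\bm v_i = \sum_{i\in H_2}\bm v_i$, and then performs one cancellation step that zeroes the smallest $k_i$ over $H_1\setminus H_2$, iterating until the support drops to $h$. You instead cast the problem as an integer feasibility system in $\sigma+1$ rows (appending $\sum_i k'_i = K$ as an extra row) and invoke an Eisenbrand--Shmonin-type sparse-solution bound $O(d\log(d\Delta))$. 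Both are Carath\'eodory-type sparsifications, and the counting inside Eisenbrand--Shmonin is essentially the paper's pigeonhole, so you would be black-boxing what the paper proves from scratch. The trade-off: your route is shorter to state but lands on a slightly different constant than the precise $h$ in the lemma (the $(\sigma+1)$ factor and additive $\log$ terms do not simplify to exactly $2(\sigma(n(n+3)/2)\lg(\lambda+1)+4\sigma\lg\sigma)$), so to match the statement verbatim you would still need to redo the counting by hand; the paper's route gives the exact constant directly. One point in your favor worth emphasizing: your insistence on building $\sum_i k'_i = K$ into the linear system as a bona fide row is the right instinct and is, in fact, more careful than the paper's write-up, which pigeonholes only on $\sum_{i\in H}\bm v_i$ and not on $|H|$ --- so the claimed preservation of $\sum k_i$ after the cancellation step holds only when $|H_1|=|H_2|$, which is not established as written. (The fix in the paper's framework is to pigeonhole on the augmented sums $(\sum_{i\in H}\bm v_i, |H|)\in\N^{\sigma+1}$; your formulation sidesteps this automatically.)
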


The proof of  Lemma \ref{lemma:smallrep}, given in Section \ref{sect:smallsupport}, uses properties of $\mathcal L$ established in our strengthened
version of Parikh's theorem (Theorem \ref{thm:depth-parikh}), and
the pigeon hole principle
to establish that any word in  $\mathcal L$ that has a linear representation with respect to $\mathcal L$
with large support, also has a linear representation with smaller support.
In particular we use the finding that all the offset and basis vectors have (relatively) small 1-norms.
This makes formal the intuition that one might draw from standard vector spaces that
the number of basis vectors needed to represent a vector/word in the span of
some basis vectors
shouldn't be more than the dimensionality of the spanned space.

Finally in Section \ref{sect:mainproof} we use Lemma \ref{lemma:smallrep} to prove Theorem \ref{thm:main}.
{
To show that for sufficiently large $q$ it is the case that $f_i^{(q)}(\bm 0)=f_i^{(q+1)}(\bm 0)$,
we show that for every tree $T' \in \mathcal T^i_{q+1} \setminus \mathcal T^i_q$,
the corresponding summand $Z(T')$  is ``absorbed'' by the earlier terms, that is:
\begin{align}
\bigoplus_{T \in \mathcal T^i_q} Z(T) \oplus Z(T') = \bigoplus_{T \in \mathcal T^i_q} Z(T) \label{eq:absorb}
\end{align}
From property (3) of Theorem~\ref{thm:depth-parikh} we know that that there exists a linear representation
of $\Psi(Y(T'))$ that has a large 1-norm;
and from Lemma~\ref{lemma:smallrep} we know  that there is a linear representation with the same large 1-norm
of $\Psi(Y(T'))$ with small support.
Thus we can conclude by the pigeonhole principle that
one of the basis vectors, in this
small support linear representation, must have a coefficient greater than the stability $p$ of the ground set.
Finally, Eqn. (\ref{eq:absorb}) follows from the stability of the semiring element that
corresponds to the semiring element that is the ``product'' of that basis vector.
}

\newcommand{\cone}{c}

\section{The Strengthened Parikh's Theorem}
\label{sect:Parikh}

\newcommand{\cB}{\mathcal{B}}
\newcommand{\cW}{\mathcal{W}}
\newcommand{\cT}{\mathcal{T}}
\newcommand{\cTs}{\mathcal{T}^S_{\cone}}

Our goal in this section is to prove Theorem     \ref{thm:depth-parikh}.

We begin our proof by defining a  semi-linear set $\mathcal{M}$ with the desired properties. Let $\cone:= n(n+3)/2$ throughout this section.
Let $\cTs$ denote the set of all parse trees (starting with the non-terminal $S$) of depth at most $\cone$. Recall that $T$'s yield, denoted as
$Y(T)$, is the word obtained by a parse tree $T$.
%
 For a wedge $W$, $Y(W)$ is  analogously defined by ignoring the unique non-terminal leaf node in the wedge $W$.
Let $N(T)$ denote the set of non-terminals that appear in $T$.
Let $\cW^A_{\cone}(T)$ be the collection of wedges that appear in $T$, have a non-terminal $A$ as the root,
and have depth (or equivalently height) at most $\cone$. Let $\cB_{\cone}(A, T) := \{ Y(W) \; | \; W \in \cW^{A}_{\cone}(T)\}$.\footnote{{While we use notations $\cW^A_{\cone}(T)$ and $\cB_{\cone}(A, T)$ for notational brevity, their dependence is on $N(T)$ rather than $T$.}} 
For notational brevity, we may use $\cB(A,T)$ instead of $\cB_{\cone}(A,T)$.

Then, for each tree $T$ in $\cTs$, we define a linear set where the offset vector is $\Psi(Y(T))$
and the basis vectors are $\cup_{V \in N(T)} \Psi(\cB(V,T))$.
Here, $\Psi(L')$ denotes the collection of vectors corresponding to the subset of words, $L'$. Notice that because of the way we created
the offset vector and basis vectors, there is a  parse tree in $\cTs$ corresponding to the offset vector and a wedge corresponding to each basis vector, all of depth at most $\cone$.

In the following we recall the definition of wedges (Figure~\ref{fig:wedge}) and define how to index them. For an arbitrary parse tree $T$ we will map it to a tree in $\cTs$ by iteratively removing a wedge.

\begin{defn}
    Define a \emph{wedge} of a parse tree $T$ as follows.  Consider two occurrences of a non-terminal $A$ in $T$ where one is an ancestor of the other. Let $A'$ be the ancestor node and $A''$ the descendant node. The wedge induced by the pair $(A', A'')$ is defined as the subtree rooted at $A'$ with the subtree rooted at $A''$ removed. The wedge is denoted as $W(A', A'')$. The wedge's depth is defined as the maximum number of edges from $A'$ to a leaf node in $W(A', A'')$.
\end{defn}

We would like to keep the following invariant, throughout the iterative process.
\begin{figure}[th]

\centering
\includegraphics[width=0.39\textwidth]{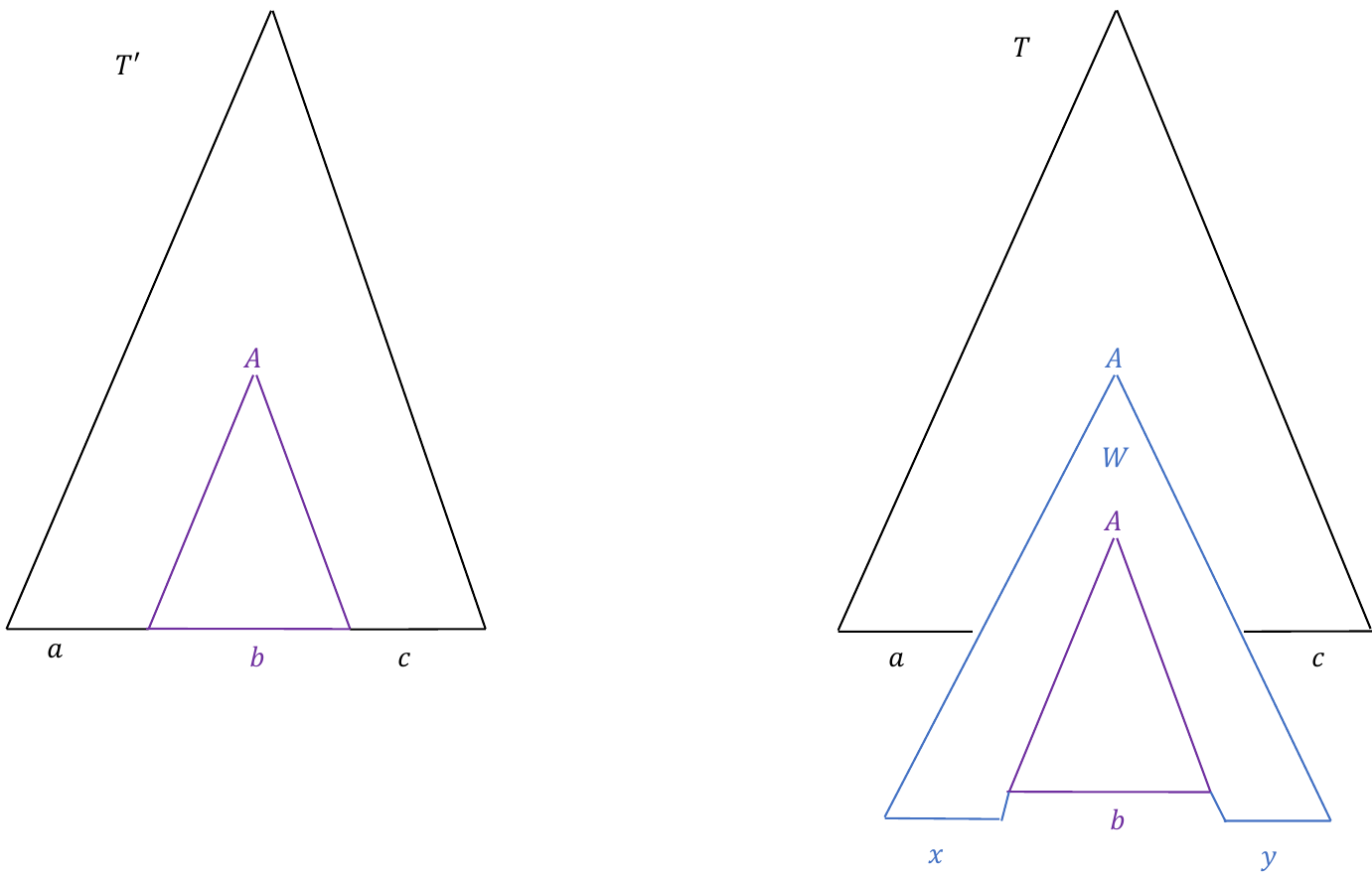}
\caption{The right tree $T$ is recovered from the left tree $T'$ by augmenting the  wedge $W$.}
\label{fig:wedge2}
\end{figure}

\begin{lmm}
    \label{lem:1step-collapse}
    Given a parse tree $T$ of depth greater than $\cone$ starting with non-terminal $S$, we can obtain a parse tree $T'$ starting with $S$ that satisfies the following:
    \begin{enumerate}
        \item (Preserving Non-terminals) $N(T) = N(T')$.
        \item (Reversibility) $T$ can be obtained by replacing one non-terminal  $A$ in $T'$ with a wedge $W \in \cW_{\cone}^{A}(T')$ for some $A \in N(T')$.
    \end{enumerate}
\end{lmm}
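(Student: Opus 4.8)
The plan is to locate inside $T$ a wedge $W$ that is simultaneously (i) shallow --- of depth at most $\cone$ --- and (ii) \emph{contractible}, meaning that deleting $W$ and gluing the subtree hanging below its lower copy of $A$ directly onto its upper copy destroys no nonterminal; then $T'$ is this contracted tree, and re-inserting $W$ recovers $T$. The two requirements are in tension: taking a wedge near the bottom of a longest root-to-leaf path makes it shallow (this is essentially the usual pumping construction), but such a wedge may hold the sole occurrence of some nonterminal; conversely, a wedge whose endpoints lie in a stretch of the path along which the set of nonterminals occurring below is \emph{constant} is safe to contract, but such stretches can sit near the top of $T$ and force a deep wedge. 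I would reconcile the two by first passing to $T^\ast := T_v$, where $v$ is the vertex at depth $d-\cone$ on a fixed longest root-to-leaf path of $T$ and $d>\cone$ is the depth of $T$; then $d-\cone\ge 1$, the subtree $T^\ast$ has depth exactly $\cone$, and \emph{every} wedge whose root lies in $T^\ast$ is automatically of depth $\le\cone$. So it suffices to produce a contractible wedge inside $T^\ast$.

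To produce it, I would walk down the tail path $v=w_0,w_1,\dots,w_{\cone}$ (the last node a terminal leaf, the first $\cone$ of them nonterminal) and record the sets $N(T_{w_k})$. These are nested and nonincreasing in $k$, so the $\cone$ nonterminal positions split into maximal ``runs'' on which $N(T_{w_k})$ is constant; since the constant sets strictly decrease from run to run and are nonempty, the $t$-th run $(t\ge 0)$ has a constant set of size at most $n-t$, and there are at most $n$ runs. Were every run of length at most the size of its set, the total length would satisfy $\cone\le\sum_{t=0}^{n-1}(n-t)=n(n+1)/2 < n(n+3)/2=\cone$, a contradiction. Hence some run has length exceeding the size of its (constant) set, so two of its path nodes $w_a,w_b$ with $a<b$ carry the same nonterminal $A$. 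Since $w_a$ and $w_b$ lie in a common run, $N(T_{w_a})=N(T_{w_b})$, and this is exactly what makes $W:=W(w_a,w_b)$ contractible: if $N_{\mathrm{out}}$ denotes the set of nonterminals occurring in $T$ outside the subtree $T_{w_a}$, then the contracted tree $T'$ has nonterminal set $N_{\mathrm{out}}\cup N(T_{w_b})=N_{\mathrm{out}}\cup N(T_{w_a})=N(T)$.

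The rest is bookkeeping. $T'$ is a legitimate parse tree starting with $S$: we graft the subtree $T_{w_b}$, whose root bears the label $A$ just like $w_a$, at the position of $w_a$, and $w_a$ is a strict descendant of the root because $d-\cone\ge 1$; also $A\in N(T')$. The wedge $W$ has root $A$, uses only nonterminals of $N(T)=N(T')$, and has depth at most that of $T_{w_a}$, hence at most $\cone$, so $W\in\cW_{\cone}^{A}(T')$; and inserting $W$ at the position of the grafted $T_{w_b}$ reconstitutes $T$, giving the reversibility claim. The genuinely delicate point --- and the place the hypothesis that $T$ has depth greater than $\cone$ is consumed --- is the step in the second paragraph: the ``flat-run'' counting bound $n(n+1)/2$ must land strictly below $\cone=n(n+3)/2$ (it does, with slack $n$), and one must be certain that contracting a wedge identified entirely inside $T^\ast$ cannot lose a nonterminal of $T$ as a whole, which is precisely what the equality $N(T_{w_a})=N(T_{w_b})$ guarantees.
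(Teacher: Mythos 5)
Your proof is correct but follows a genuinely different route from the paper's. The paper argues by induction on the number $n$ of nonterminals: walking up from a deepest leaf, it locates the first repeated nonterminal within height $n+1$ by pigeonhole; if the corresponding wedge is not contractible (it carries a nonterminal unique to it), the whole subtree rooted at the upper occurrence is stripped off and the induction hypothesis is applied to what remains, and the arithmetic $c = (n+1)+n+\cdots+2 = n(n+3)/2$ is exactly what this telescoping recursion requires. Your argument avoids the induction entirely. You first descend $d-c$ levels along a longest root-to-leaf path so that any wedge rooted in the surviving subtree $T^\ast$ automatically has depth at most $c$, and then run a direct counting argument on the weakly decreasing chain $N(T_{w_0}) \supseteq N(T_{w_1}) \supseteq \cdots$ along the residual path of length $c$: splitting it into constant-value runs, the bound $\sum_{t=0}^{n-1}(n-t)=n(n+1)/2 < c$ forces some run to exceed the cardinality of its constant set, pigeonhole inside that run produces a repeated nonterminal $A$ at $w_a, w_b$, and the constancy $N(T_{w_a}) = N(T_{w_b})$ is precisely what makes $W(w_a,w_b)$ contractible in one shot. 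Your route is in fact slightly tighter (any $c > n(n+1)/2$ would suffice, so you have slack $n-1$), whereas the paper's recursion genuinely consumes the full $n(n+3)/2$. Both arguments implicitly rely on the paper's footnoted convention that $\mathcal{W}^A_c(T)$ depends only on $N(T)$, which is what lets one regard the removed wedge as an element of $\mathcal{W}^A_c(T')$ once $N(T)=N(T')$ has been established.
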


Alternatively, the second property means that $T$ can be obtained from $T'$ by augmenting $T'$ with a wedge $W$ of depth at most $\cone$ corresponding to a vector in $\Psi(\cB(A,T'))$ for some non-terminal $A$ in $N(T')$. Here it is worth noting that $\cW^A_{\cone}(T) = \cW^A_{\cone}(T')$ because $N(T) = N(T')$. See Figure~\ref{fig:wedge2} for an illustration of reversibility.

The first property in the lemma is worth special attention. Suppose we obtained $T'$ from $T$ by repeatedly applying the lemma, but without guaranteeing the first property. Suppose
$\{W_1, W_2, \dots \}$ are the wedges we removed in the process; so $W_i$ is the wedge removed in iteration $i$. Then, we may not be able to augment $T'$ with an arbitrary subset of the wedges, which is critical to establish the {$\Psi(L) \supseteq \calM$} direction of the first property of Theorem~\ref{thm:depth-parikh}.

To show Lemma~\ref{lem:1step-collapse}, consider an arbitrary parse tree $T$ of depth more than $\cone$. We show how to obtain $T'$ by collapsing a wedge induced by two occurrences of the same non-terminal. Below, we describe how we find a ``good'' pair of two occurrences of the same non-terminal we want to collapse. We first define what makes pairs good in the following.

\begin{defn}
    For a given parse tree $T$, we say a pair of ascendant and descendant nodes $(A', A'')$ of  the same non-terminal $A$ is good if it satisfies the following:
    \begin{itemize}
        \item Let $T'$ be the tree $T$ with the wedge $W(A', A'')$ removed. We have $N(T) = N(T')$.
        \item The height of $A'$ is at most $\cone$. In other words, the subtree rooted at $A'$ has depth at most $\cone$.
    \end{itemize}

\end{defn}

In the following we will show that a tree of large depth must have  a good pair. Note that we will immediately have Lemma~\ref{lem:1step-collapse} as corollary if we prove the following the lemma.

\begin{lmm}
    A parse tree $T$ of depth at least $\cone$ has a good pair of nodes. 
\end{lmm}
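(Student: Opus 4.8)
The plan is to show that any parse tree $T$ of depth at least $\cone = n(n+3)/2$ contains a good pair, by a careful accounting argument along a longest root-to-leaf path. First I would fix a longest path $\pi = (u_0, u_1, \ldots, u_d)$ in $T$ from the root $u_0$ down to a leaf $u_d$, where $d \geq \cone$. Each internal node $u_j$ on this path is labeled by some nonterminal; I want to find two nodes $u_i, u_j$ on $\pi$ ($i < j$) labeled by the same nonterminal $A$ such that (a) collapsing the wedge $W(u_i, u_j)$ does not delete any nonterminal from $T$ entirely (the ``preserving nonterminals'' condition), and (b) the subtree rooted at $u_i$ has depth at most $\cone$. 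Condition (b) is easy to arrange on its own: take $u_i$ low enough on the path. The real work is simultaneously getting (a).

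The key idea for condition (a) is to look at the bottom portion of $\pi$ and track, as we walk down, the set of ``safe'' nonterminals — those that still appear somewhere in $T$ outside the current subtree, or appear at least twice within it. When we collapse a wedge $W(u_i,u_j)$, the only nonterminals at risk of disappearing are those occurring in the subtree at $u_i$ but in neither the subtree at $u_j$ nor anywhere outside the subtree at $u_i$. So I would walk up from the leaf end: consider the nodes $u_d, u_{d-1}, \ldots$ and for each pair of equal-labeled nodes among them, check whether the wedge between them is ``nonterminal-preserving.'' If along a stretch of $c$ consecutive path-nodes we never find such a pair, I claim the nonterminals along that stretch must be, in a suitable sense, all distinct or else each contributing a ``new'' nonterminal that forces the wedge to fail — and a counting argument bounds $c$ by something like $n$ or $n(n+1)/2$, contradicting $d \geq \cone$. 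The precise combinatorial bound $\cone = n(n+3)/2 = \binom{n+1}{2} + n$ strongly suggests the argument is: among the bottom $\cone$ path-nodes, either two equal-labeled nodes have a good wedge between them, or we can charge each node to a distinct pair $(A, B)$ of nonterminals (the repeated nonterminal and a ``witness'' nonterminal that would be destroyed), plus $n$ extra slots, and $\binom{n+1}{2} + n$ such tokens cannot all be distinct once $d$ exceeds that bound.

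Concretely, the steps I would carry out are: (1) take a longest path; by pigeonhole on $n$ nonterminals, once the path has length exceeding $n$ there are repeated labels. (2) Among all ancestor–descendant equal-label pairs $(u_i,u_j)$ on $\pi$ with the subtree at $u_i$ of depth $\le \cone$, show at least one is nonterminal-preserving. (3) For the contradiction, suppose none is: then for every such pair there is a nonterminal $B = B(u_i,u_j)$ occurring only inside the $u_i$-subtree and outside the $u_j$-subtree; use this to injectively label path-segments or nodes by pairs/single nonterminals, and bound the number of usable path-nodes by $n(n+3)/2$, contradicting the depth hypothesis. (4) Conclude the good pair exists, hence Lemma~\ref{lem:1step-collapse} follows by collapsing it: $T'$ satisfies $N(T')=N(T)$ by the preservation property, and $T$ is recovered from $T'$ by reinserting the wedge $W(u_i,u_j)$, which has depth at most $\cone$ since it sits inside the $u_i$-subtree.

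The main obstacle I expect is pinning down the exact counting that yields the bound $\cone = n(n+3)/2$ rather than something weaker like $n^2$ or $2n^2$. This requires being precise about \emph{which} nonterminals can be ``lost'' when collapsing a wedge and setting up the charging scheme so that each of the $\cone$ bottom path-nodes is accounted for by a distinct combinatorial object — most likely an unordered pair $\{A,B\}$ of distinct nonterminals (giving $\binom{n}{2}$), plus the $n$ pairs $\{A,A\}$, plus a linear correction term, summing to $\binom{n}{2} + n + \text{(something)} = n(n+3)/2$. Getting the off-by-one bookkeeping right between ``depth,'' ``number of edges,'' and ``number of internal nodes on the path'' is where the care is needed; the structural/graph-theoretic content (collapse a repeated nonterminal, reinsert to reverse it) is straightforward once the counting is set up.
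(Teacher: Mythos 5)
Your high-level outline has the right ingredients---take a longest root-to-leaf path, look for a repeated nonterminal whose wedge preserves $N(T)$, and aim for the bound $\cone = n(n+3)/2$---but the load-bearing step is not actually carried out. You explicitly defer the counting (``pin down the exact counting,'' ``off-by-one bookkeeping'') and float a charging scheme assigning each of the bottom $\cone$ path-nodes to an unordered pair $\{A,B\}$ of nonterminals; you have not shown this assignment is injective, and it is not at all clear that it is, since the ``witness'' nonterminal destroyed by a bad wedge depends on the wedge and can easily repeat across different pairs. As written this is a plan with a hole exactly where the proof has to happen.

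The paper takes a different and cleaner route: induction on the number $n$ of nonterminals, organized around the identity $\cone = (n+1) + n + \cdots + 2$. Walk up from a deepest leaf; by pigeonhole some nonterminal $V_1$ repeats among the bottom $n+1$ nodes, giving a candidate pair $(u(V_1), v(V_1))$ whose top node has height at most $n+1$. If this pair is good, you are done. If not, the wedge $W(u(V_1), v(V_1))$ contains a nonterminal that appears nowhere else in $T$; deleting the entire subtree rooted at $u(V_1)$ produces a tree $T_2$ with at most $n-1$ nonterminals that still has depth at least $n + (n-1) + \cdots + 2$, so by induction $T_2$ has a good pair, and one checks this pair remains good in $T$ (its height grows by at most $n+1$, staying $\le \cone$, and removing its wedge cannot destroy a nonterminal because the wedge is disjoint from the deleted subtree). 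This nested-sum structure is what produces $\sum_{k=2}^{n+1} k = n(n+3)/2$; it is not a $\binom{n}{2}+n$ pair count, so the combinatorial object your sketch guesses at is not the one that actually closes the argument. If you want to finish a proof along your lines, I would reorganize it as this induction rather than as a single global charging scheme.
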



\begin{proof}
We prove the lemma by an induction on the number of non-terminals.
Consider an arbitrary node $v$ of the largest depth, which must be at least $\cone = (n+3)n / 2 \geq (n+1) + n + \ldots + 2$. Since the base case $n = 1$ is trivial, suppose $n\geq 2$. Consider the unique path from $v$ to the root non-terminal $S$ in $T$. 
The height of a node $u$ on the path is defined as the number of nodes below $u$ on the path, including $u$.

{Consider walking from $u$ towards the root. On this path, consider the first time two occurrences of the \emph{same} non-terminal $V_1$ appear. Say they appear at nodes $v(V_1)$ and $u(V_1)$, where $u(V_1)$ is an ascendant of $v(V_1)$. Observe that $u(V_1)$ has height at most $n+1$ due to the pigeon hole principle. This is because some non-terminal must repeat among $n+1$ nodes. 
}



{If $(u(V_1), v(V_1))$ is a good pair, we are done. If not, it means that the wedge $W(u(V_1), v(V_1))$ must include a non-terminal that \emph{doesn't appear anywhere else in the tree.}   Consider the tree $T_2$ with the subtree rooted at $u(V_1)$ removed. This subtree $T_2$ has at most $n-1$ non-terminals and has  depth at least $n + (n-1) + \ldots + 2$. By induction, this implies that $T_2$ must have a good pair $(u_2, v_2)$. }

{Finally, it is easy to see that the pair remains to be good with respect to $T$ as well: 
 First, $u_2$ has height at most $(n+1) + (n + (n-1) + \ldots + 2) = c$ in the tree $T$. Second, the wedge indexed by $(u_2, v_2)$ does not intersect the subtree rooted at $u(V_1)$, and therefore, the set of non-terminals remains unchanged after removing the wedge from $T$, just as it does when we remove the wedge from $T_2$.
}
\end{proof}


We are now ready to prove Theorem~\ref{thm:depth-parikh}.

\noindent
\textbf{Property (1) { $\mathcal M \supseteq \Psi(L)$ and Property (3)}.}
Given a parse tree $T$ for $w \in L$, suppose we obtained a sequence of trees $T_0 = T, T_1, \ldots, T_\eta$ by repeatedly applying
Lemma~\ref{lem:1step-collapse}, where $T_\eta \in \cT^S_{\cone}$ and $T_i$ is obtained from $T_{i-1}$ by deleting
a wedge $W_i$ in $\cW^{A}_{\cone}(T)$ for some non-terminal $A$ in $N(T) = N(T_\eta)$; note $\cW^{A}_{\cone}(T) = \cW^{A}_{\cone}(T_\eta)$ since $N(T) = N(T_1) = \ldots = N(T_{\eta})$.
 Let $\bm b_i = \Psi(Y(W_i))$. Clearly,
$\Psi(w)$ can be expressed as $\Psi(Y(T_\eta)) + \sum_{i = 1}^{\eta} \bm b_i$.

Since we created a linear set for each parse tree in $\cTs$, thus for $T_\eta$, this is a linear representation within $\calL$ which consists of offset vector $\Psi(Y(T_\eta))$ and basis vectors $\cup_{V \in N(T_\eta)} \Psi(\cB(V,T_\eta))$. This proves { $\mathcal M \supseteq \Psi(L)$} part of the first property.

{
Property (3) is immediate from the above: Suppose that  the parse tree $T$ has depth $d$. The two trees $T_{i-1}$ and $T_i$ have depths differing by at most $c$ since we obtained $T_{i}$ from $T_{i-1}$ by deleting a wedge of depth at most $c$. Further, the last tree $T_{\eta}$ has depth at most $c$ as well. Thus, $c (\eta +1) \geq d$. Since $\eta = k$ where $k$ is described as in the theorem, we have proven this property.
}


\noindent
\textbf{Property (1) { $\mathcal M \subseteq \Psi(L)$} and Property (2)(d).}
Conversely, suppose $w$ has a linear representation within some $\calL \in \mathcal{M}$.
Say the linear representation is $\Psi(Y(T')) + \sum_{i = 1}^k \bm b_i$ for some $T' \in \cT_c^{S}$.
Note that for each basis vector $\bm b_i$, there exists $V_i \in N(T')$ such that $\bm b_i \in \Psi(\cB(V_i,T'))$.
Because of the way we defined linear sets, $\bm b_i = \Psi(Y(W_i))$ for some $W_i \in \cW_{\cone}^{V_i}(T')$. We can augment $T'$ with the wedges $W_i$  in an arbitrary order.
Adding each wedge $W_i$ increases the tree depth by at most $\cone$. {By repeating this for each $\bm b_i$, we obtain a parse tree $T$ such that $\Psi(Y(T)) = \Psi(w)$. This proves $\mathcal M \subseteq \Psi(L)$ part of Property (1). Furthermore, Property (2)(d) follows since $T$ has depth at most at most $\cone (k+1)$. }
%

\noindent
\textbf{Property (2)(a,b,c).} By definition of the offset and basis vectors, it immediately follows that their depth is at most $\cone$. Furthermore, the 1-norm of any of them is at most $\lambda^{\cone}$ because each node has at most $\lambda$ children.

%
\medskip

Together, the proof of the above properties  prove Theorem     \ref{thm:depth-parikh}.

\section{Small Support Representations}
\label{sect:smallsupport}

{
In this section we prove Lemma \ref{lemma:smallrep}. The lemma shows that any linear representation of the
Parikh image of a  word in a
context-free language $L$ can be converted into another linear representation with equal 1-norm, but  with small support. 

\begin{lmm}{[Lemma~\ref{lemma:smallrep} Restated]}
Let $L$ be an arbitrary context free language. Let $h := 2(\sigma (n(n+3) / 2) \lg (\lambda  +1) + 4\sigma \lg \sigma)$.
Let  $\mathcal M$ be the semilinear set that is guaranteed to exist in Theorem \ref{thm:depth-parikh}.
Let $w$ be a word in $  L$.
Let $\mathcal L$ in $\mathcal M$ be a linear set such that
$\Psi(w) \in \mathcal L$.  Let
$\Psi(w) = \bm v_0 + k_1 \bm v_1 + \ldots + k_m \bm v_m$ be a linear representation of $\Psi(w)$ with respect to the offset and basis vectors of $\mathcal L$.
Then there exists another linear representation
$\Psi(w) = \bm v_0 + k'_1 \bm v'_1 + \ldots + k'_h \bm v'_h$ of $\Psi(w)$ with respect to $h$ basis vectors of $\mathcal L$ such that $\sum_{i=1}^m k_i = \sum_{i=1}^h k'_i $.
\end{lmm}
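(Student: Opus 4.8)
The plan is to reduce the number of basis vectors used in a linear representation while keeping the total coefficient sum $\sum k_i$ fixed, by a pigeonhole/dimension argument applied to the ``profiles'' of the basis vectors. First I would set up the relevant finite ambient space: by Theorem \ref{thm:depth-parikh}(2)(c), every basis vector $\bm v'$ of $\mathcal L$ has $1$-norm at most $N := (\lambda+1)^{n(n+3)/2}$, hence lies in the set $B := \{\bm u \in \N^\sigma : \|\bm u\|_1 \le N\}$. The number of such vectors is at most $(N+1)^\sigma$; taking logs, $\lg |B| \le \sigma \lg(N+1) \le \sigma\big((n(n+3)/2)\lg(\lambda+1) + 1\big)$, which up to the additive $\sigma \lg \sigma$ slack and the factor $2$ is exactly the quantity $h$ in the statement. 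So the target bound $h$ is, morally, ``twice the log of the number of distinct possible basis vectors,'' which is the hint that the argument is: (i) collapse repeated basis vectors, then (ii) reduce below the count of distinct vectors by an integer-programming/Carathéodory-type exchange.

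The key steps, in order. Step 1 (merge duplicates): given $\Psi(w) = \bm v_0 + \sum_{i=1}^m k_i \bm v_i$, group equal basis vectors and add their coefficients, so WLOG all $\bm v_i$ are distinct; this does not change $\sum k_i$ and does not increase the number of terms. Step 2 (bound the number of distinct vectors, weighted): if after Step 1 we already have $m \le h$ we are done, so assume $m > h$. I want to show that whenever the number of distinct basis vectors used exceeds $h$, there is a nontrivial $\N$-combination of a subset of them summing to another such combination with strictly fewer vectors and the same coefficient sum --- equivalently, that the set $\{(\bm v_i, 1)\}_i \subset \N^{\sigma+1}$ (appending a $1$ to track the coefficient total) has a linear dependency with the right sign structure. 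Step 3 (the exchange): find nonnegative integers $a_i, b_i$, not all zero, with $\sum a_i (\bm v_i,1) = \sum b_i (\bm v_i,1)$ and $a_i b_i = 0$ for each $i$ (a primitive positive cycle in the monoid), such that $a_i \le k_i$; then replace $k_i \mapsto k_i - a_i + b_i$. The appended $1$-coordinate forces $\sum a_i = \sum b_i$, so $\sum k_i$ is preserved; and because some $a_i$ equals $k_i$ on a chosen tight coordinate, at least one basis vector drops out, strictly decreasing the support. Iterating Step 3 until the support is $\le h$ finishes the proof, since each iteration strictly decreases an integer $\ge 0$.

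The main obstacle --- and where the precise value of $h$ (with its factor $2$ and the $4\sigma\lg\sigma$ term) gets used --- is making Step 2/Step 3 quantitative: I need to guarantee that such a sign-structured dependency with $a_i \le k_i$ exists as soon as the support exceeds $h$, not merely that some $\Z$-dependency exists. The clean way is a counting argument: consider partial sums $\bm s_j := \bm v_0 + \sum_{i\le j} k_i \bm v_i$ restricted to enough coordinates / reduced modulo a suitable lattice, or more directly, count the number of distinct achievable ``states'' $(\text{partial Parikh vector truncated to norm} \le \text{something}, \text{partial coefficient count})$ along the summation; once the support exceeds the number of states, two partial configurations coincide and the segment between them is a removable cycle with the balanced $1$-coordinate. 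Bounding that number of states by $2^{h}$ (hence support $> h$ forces a collision) is exactly where $\sigma(n(n+3)/2)\lg(\lambda+1)$ (from the vector norms, via Theorem \ref{thm:depth-parikh}(2)(c)) and the $\sigma \lg \sigma$ terms (from the number of coordinates and the range of each) enter, and keeping all these logs and the doubling honest is the delicate bookkeeping; conceptually, though, it is just ``a vector in the $\N$-span of a finite set of small vectors has an $\N$-representation using at most $O(\text{ambient log-size})$ of them, with controlled coefficient sum,'' an effective Carathéodory theorem for numerical monoids.
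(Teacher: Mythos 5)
Your high-level strategy matches the paper's: find a dependency among the basis vectors by a pigeonhole/counting argument, use it to exchange coefficients so that one $k_i$ drops to zero, and iterate until at most $h$ vectors remain. Your idea of appending a $1$-coordinate to force $\sum a_i = \sum b_i$ is a genuine refinement, and in fact addresses a subtlety the paper's proof glosses over: the paper's collision only gives $\sum_{i \in H_1}\bm v_i = \sum_{i \in H_2}\bm v_i$ without controlling $|H_1|$ versus $|H_2|$, whereas the claimed identity $\sum k_i = \sum k'_i$ requires $|H_1|=|H_2|$ (or, as you implicitly suggest, a pigeonhole on the pair $(\sum_{i\in H}\bm v_i, |H|)$).

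However, the pigeonhole step as you sketch it does not work. You propose to walk along the sum, collect prefix ``states'' $\bm s_j = \bm v_0 + \sum_{i \le j} k_i\bm v_i$ together with a running coefficient count, and find a collision. But the running coefficient count $\sum_{i\le j}k_i$ is strictly increasing in $j$, and even ignoring the appended coordinate, the $\bm s_j$ are strictly increasing in $1$-norm (each $\bm v_i \ne \bm 0$ and each $k_i>0$), so no two prefixes can coincide; ``reducing modulo a suitable lattice'' is not an available move either, since the relation must hold over $\N^\sigma$, not a quotient. The paper counts something different: all $2^m-1$ non-empty \emph{subset} sums $\sum_{i\in H}\bm v_i$. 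Since $||\bm v_i||_1 \le M := \lambda^{n(n+3)/2}$ by Theorem~\ref{thm:depth-parikh}(2)(c), these sums lie in $\{\bm u \in \N^\sigma : ||\bm u||_1 \le Mm\}$, a set of size at most $(Mm+\sigma)^\sigma$, and the arithmetic shows $(Mm+\sigma)^\sigma < 2^m$ once $m > h$. This is where the precise form of $h$ comes from --- it must absorb the extra $\sigma\lg m$ arising from $m$ appearing inside the log, which is why the $4\sigma\lg\sigma$ term and the factor of $2$ appear; it is not simply ``twice the log of the number of distinct basis vectors'' as you guessed, since that count does not depend on $m$. Two distinct subsets $H_1\ne H_2$ then have equal sums; setting $i^* = \arg\min_{i\in H_1\setminus H_2}k_i$ and updating $k_j \mapsto k_j - k_{i^*}$ on $H_1\setminus H_2$ and $k_j \mapsto k_j + k_{i^*}$ on $H_2\setminus H_1$ preserves the represented vector, keeps all coefficients non-negative, and zeroes out $k_{i^*}$. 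Replacing prefix sums with subset sums, and carrying your $|H|$-coordinate through the count (which only costs an extra factor of $m+1$, absorbed by the same slack) would turn your sketch into a complete and slightly cleaner proof than the paper's.
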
}

\begin{proof}
To streamline our analysis, we will assume that $\lambda \geq 2$. This is without loss of generality because in  the case that $\lambda =1$ we can add an unused terminal to $\Sigma$. The value of $\lambda$ will increase by one in the final bound for this boundary case.
{
For an arbitrary word $w$ in our language $L$, suppose we are given a linear representation of $\Psi(w)$,
$$\bm v_0 + k_1 \bm v_1 + \ldots + k_m \bm v_m,$$
 where
 $$k_1, k_2, \ldots k_m > 0 \textnormal{ and }$$
 $$m > h:= 2(\sigma (n(n+3) / 2) \lg \lambda + 4\sigma \lg \sigma).$$
 It suffices to find another linear representation of $\Psi(w)$,
 $$\bm v_0 + k'_1 \bm v_1 + \ldots + k'_m \bm v_m$$ such that
 \begin{equation}
    \label{eqn:wewant1}
 k_1 + k_2 + \ldots + k_m = k'_1 + k'_2 + \ldots + k'_m \textnormal{ and }
 \end{equation}
 \begin{equation}
    \label{eqn:wewant1}
  k'_i = 0 \textnormal{ for some } i \in [m]:= \{1, 2, \ldots, m\}.
 \end{equation}
}

A key step to our proof is showing that there exist two distinct
subsets $H_1 $ and $H_2$ of {$[m] :=\{1, 2, \ldots, m\}$} such that
\begin{equation}
    \label{eqn:H12}
    \sum_{i \in H_1} \bm v_i =  \sum_{i \in H_2}  \bm v_i
\end{equation}
We will prove this claim by proving that
\begin{align}
K &:= |\{ \sum_{i \in H} \bm v_i \; | \; \emptyset \neq H \subseteq [m]    \}| < 2^m.
\label{eqn:K:bound}
\end{align}
Note that $K$ is the number of distinct vectors we can generate by summing
a non-empty subset of vectors from $\bm v_1, \bm v_2, \ldots, \bm v_m$.
The existence of the desired pair of $H_1$ and $H_2$ satisfying~\eqref{eqn:H12} will then
follow from pigeonhole principle.

Let $M$ be the maximum 1-norm of any $\bm v_i$, i.e., $||\bm v_i||_1 \leq M$, for all $i \in [m]$. Thus, we have $||\sum_{i \in H}  \bm v_i||_1 \leq Mm$ for any $H \subseteq [m]$. We use the following well-known fact: the number of distinct vectors in
$\mathbb{N}^d$ with 1-norm of $k$ is exactly $\binom{k +d -1}{d-1} \leq (k+d-1)^{d-1}$ (see~\cite{MR2868112}). In our case, $k \leq Mm$ and $d = \sigma$. Thus, we have $K \leq (Mm + \sigma -1)^{\sigma-1} (Mm +1) \leq (Mm+ \sigma)^\sigma$.
If $\sigma = 1$ (there exists only one terminal), we have a tighter bound of $K \leq M + (M-1) + \ldots + M - (m-1) = m(2M - (m-1))/2$.

{
To prove~\eqref{eqn:K:bound}, it remains to show that $(Mm+\sigma)^\sigma < 2^m$ when $\sigma \geq 2$ and that $h(2M-m+1)/2 < 2^m$ when $\sigma=1$.
We consider two cases: $\sigma \geq 2$ and $\sigma = 1$.}

\smallskip
\noindent{
\textbf{Case i: $\sigma \geq 2$.}}
We shall now establish
\begin{equation}
    \label{eqn:pigeon}
 2^{m}>  (Mm+\sigma)^\sigma \textnormal{ when $\sigma \geq 2$}
\end{equation}

By taking the logarithm, the inequality (\ref{eqn:pigeon}) is equivalent to:
\begin{align}
    m > \sigma \lg (M m+\sigma) \label{eqn:pigeon2}
\end{align}
{
We know from Theorem \ref{thm:depth-parikh} (c) that $M \le \lambda^{n(n+3)/2}$.
Thus we know that the following equation would imply Eqn.
(\ref{eqn:pigeon2}):
\begin{align}
  m > \sigma \lg (m \lambda^{n(n+3)/2} +\sigma ) \label{eqn:pigeon3}
\end{align}

Using the fact that $\lg x$ is sub-additive when $x \geq 2$ and the assumptions that $\lambda, \sigma \geq 2$, we have
\begin{align}
   \sigma \lg (m \lambda^{n(n+3)/2}) + \sigma \lg \sigma > \sigma \lg (m \lambda^{n(n+3)/2} +\sigma ) \label{eqn:pigeon4}
\end{align}
}

Thus, it is sufficient to show:
\begin{align}
  & \;\; m > \sigma \lg (m \lambda^{n(n+3)/2}) + \sigma \lg \sigma \nonumber \\ & \;\;= \sigma \lg m + \sigma (n(n+3) / 2) \lg \lambda + \sigma \lg \sigma \nonumber \\
   \Leftrightarrow & \;\; m - \sigma \lg m>   \sigma (n(n+3) / 2) \lg \lambda + \sigma \lg \sigma  \label{eqn:pigeon5}
\end{align}

We show that
\begin{equation*}
m - \sigma \lg m \geq m / 2
\end{equation*}
when $m \geq 8 \sigma \lg \sigma$: Since $m/2 - \sigma \lg m$ is increasing in $m$ when $m \geq 8 \sigma \lg \sigma$, we have
$m/2 - \sigma \lg m \geq 4 \sigma \lg \sigma - \sigma \lg (8 \sigma \lg \sigma)
= \sigma \lg (\sigma^3 / (8\lg \sigma)) \geq 0$ when $\sigma \geq 2$, as desired. 

Thus, we have
\begin{equation}
    \label{eqn:pigeon6}
m - \sigma \lg m \geq m / 2 > \sigma (n(n+3) / 2) \lg \lambda + 4 \sigma \lg \sigma,
\end{equation}
{
where the second inequality follows from the fact that $m > h$. From Eqn.
(\ref{eqn:pigeon}),
(\ref{eqn:pigeon2}),
(\ref{eqn:pigeon3}), (\ref{eqn:pigeon4}), (\ref{eqn:pigeon5}), and (\ref{eqn:pigeon6})
we have $2^m > K$ when $\sigma \geq 2$.}

{
\smallskip
\noindent
\textbf{Case ii: $\sigma = 1$.}}
If $\sigma = 1$, as mentioned above,
we have
\begin{align*}
K & \leq M + (M-1) + \ldots + M - (m-1) \\
&= m(2M - (h-1))/2  \\
&< Mm \\
&\leq M2^{m/2} \quad \mbox{[\text{Since $m\geq 2$}]} \\
& \leq \lambda^{n(n+3)/2} 2^{m/2}  \\
&\leq 2^m.
\end{align*}
The last inequality is true due to the assumption that  $\sigma = 1$ and
\begin{align*}
m &> h = 2(\sigma (n(n+3) / 2) \lg \lambda + 4\sigma \lg \sigma) \geq n (n+3) \lg \lambda
\end{align*}

\smallskip
{
Thus, we have shown that $2^m > K$ for all $\sigma \geq 1$, which establishes the existence of $H_1 \neq H_2 \subseteq [m]$ satisfying Eqn. (\ref{eqn:H12}).}

We now explain how to
construct a new representation of $\Psi(w)$ that contains less basis vectors.
First observe that one of two sets $H_1, H_2$ doesn't contain the other since no basis vectors are $\bm 0$ and we have Eqn.~(\ref{eqn:H12}).
Let $i$ be $\arg \min_{i' \in H_1 - H_2} k_{i'}$, breaking ties arbitrarily.
Then for $j \in H_1-H_2$ let $k_j' = k_j - k_i$,
for $j \in H_2- H_1$ let $k_j' = k_j + k_i$,
and for all other $i$ let $k_j' = k_j$.

Note that there was no change in the sum, i.e.,
\begin{align}
 w = & \; \bm v_0 + k_1 \bm v_1 + \ldots + k_m \bm v_m \nonumber \\
  =  & \; \bm v_0 + k'_1 \bm v_1 + \ldots + k'_m \bm v_m \label{eqn:1-step}
\end{align}
However, this new representation $\langle k'_1, k'_2, \ldots, k'_m \rangle$ has a strictly  smaller support since $k'_i = 0$.
{
Observe that $k_1 + k_2 + \ldots + k_m = k'_1 + k'_2 + \ldots + k'_m$. Thus, we have found another linear representation $\bm v_0 + k'_1 \bm v_1 + \ldots + k'_m \bm v_m$ of $\Psi(w)$ that has a smaller support than the given linear representation
$\bm v_0 + k_1 \bm v_1 + \ldots + k_m \bm v_m$ preserving the 1-norm value in the linear representation.

We can repeat this process until we obtain a linear representation of support size at most $h$.} Finally, recall that we assumed $\lambda \geq 2$. To remove this assumption, as mentioned at the beginning, we can add an unused terminal to $\Sigma$, which increments the value of $\lambda$ by one in the bound.
\end{proof}

\section{Bounding the Number of Iterations}
\label{sect:mainproof}

This section is devoted to proving Theorem \ref{thm:main}, restated here.

{
\begin{thm} [Theorem~\ref{thm:main} Restated ]
Let $\bm S$ be a $p$-stable commutative semiring.
Let $P$ be a $\Name$ program where the maximum number of multiplicands
in any product is at most $\lambda$. Let $D$ be the input EDB database.
Let $\sigma$  be number of the semiring elements referenced in $P$ or $D$.
Let $n$ denote the total number of ground atoms in an IDB
that at some point in the iterative evaluation of $P$ over semiring $\bm S$ on input $D$ have
a nonzero associated semiring value.
Then the iterative evaluation of $P$ over semiring $\bm S$ on input $D$
    converges within
    $$\lceil  p n(n+3) \cdot (\sigma (n(n+3) / 2) \lg (\lambda+1) + 4\sigma \lg \sigma + 1) \rceil$$
        steps.
\end{thm}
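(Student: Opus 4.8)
The plan is to combine the two big ingredients already established — the strengthened Parikh theorem (Theorem~\ref{thm:depth-parikh}) and the small-support lemma (Lemma~\ref{lemma:smallrep}) — with the stability of the semiring. First I would set up the grounded ICO $\bm f : S^n \to S^n$ of the program $P$ on input $D$, recall the parse-tree expansion $f_i^{(q)}(\bm 0) = \bigoplus_{T \in \calT^i_q} Z(T)$ from~\eqref{eqn:YT}, and fix the context-free grammar $G$ associated with $\bm f$: its nonterminals are the (at most $n$) active IDB ground atoms, its terminals are the (at most $\sigma$) semiring elements referenced in $P$ or $D$, and the maximum righthand-side length is $\lambda$. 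Let $L_i$ be the language generated from start symbol $X_i$. The goal is to exhibit a threshold $Q$ such that $f_i^{(Q+1)}(\bm 0) = f_i^{(Q)}(\bm 0)$ for every $i$; by monotonicity of the iterative evaluation in the sense that once a round adds nothing new it is a fixed point, this gives convergence within $Q+1$ steps, and I would then check that the claimed bound $\lceil p n(n+3)(\sigma (n(n+3)/2)\lg(\lambda+1) + 4\sigma\lg\sigma + 1)\rceil$ is exactly $Q+1$ for the right choice of $Q$.

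The core argument is an ``absorption'' claim: for $Q := p n(n+3)(\sigma(n(n+3)/2)\lg(\lambda+1) + 4\sigma\lg\sigma + 1) - 1$ (or whatever makes the ceiling come out right), every parse tree $T' \in \calT^i_{Q+1} \setminus \calT^i_Q$ satisfies $\bigoplus_{T \in \calT^i_Q} Z(T) \oplus Z(T') = \bigoplus_{T \in \calT^i_Q} Z(T)$, which immediately yields $f_i^{(Q+1)}(\bm 0) = f_i^{(Q)}(\bm 0)$ via~\eqref{eqn:YT}. To prove the absorption claim, take such a $T'$; it has depth $d$ with $Q < d \le Q+1$. By property~(3) of Theorem~\ref{thm:depth-parikh} applied to the word $w = Y(T') \in L_i$, there is a linear representation $\Psi(w) = \bm v_0 + \sum_{j=1}^m k_j \bm v_j$ within some $\mathcal L \in \mathcal M$ with $1 + \sum_j k_j \ge d / (n(n+3)/2)$, so $\sum_j k_j \ge d/(n(n+3)/2) - 1 > Q/(n(n+3)/2) - 1$. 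Now invoke Lemma~\ref{lemma:smallrep}: there is another representation $\Psi(w) = \bm v_0 + \sum_{j=1}^{h} k'_j \bm v'_j$ with at most $h = 2(\sigma(n(n+3)/2)\lg(\lambda+1) + 4\sigma\lg\sigma)$ basis vectors and $\sum_j k'_j = \sum_j k_j$. By pigeonhole over these $\le h$ coefficients, some $k'_{j^\star} \ge (\sum_j k'_j)/h$; plugging in $Q$ and simplifying, this forces $k'_{j^\star} > p$.

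It then remains to cash in $k'_{j^\star} > p$ using stability. By property~2(b) of Theorem~\ref{thm:depth-parikh}, the basis vector $\bm v'_{j^\star}$ is $\Psi(Y(W))$ for some wedge $W \in \cW^A_{\cone}(T')$ with some nonterminal $A = $ some active ground atom; let $u \in S$ be the product $\bigotimes_{a_\ell \in \Sigma} a_\ell^{\Psi_\ell(Y(W))}$ of the semiring values on the yield of that wedge. The key observation — exactly analogous to the cycle argument of~\cite{im2023convergence}, and formalizable via property~2(d) / reversibility of the process $\mathcal P$ — is that, using commutativity, the set of yields obtainable from the representation $\bm v_0 + \sum_j k'_j \bm v'_j$ as the coefficient of $\bm v'_{j^\star}$ ranges over $0, 1, \dots, k'_{j^\star}$ contributes, after the semiring products are taken and summed, precisely the factor $u^{(k'_{j^\star})} = 1 \oplus u \oplus \cdots \oplus u^{k'_{j^\star}}$ multiplied by the rest. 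Since $k'_{j^\star} > p$ and $\bm S$ is $p$-stable, $u^{(k'_{j^\star})} = u^{(k'_{j^\star}-1)} = \cdots = u^{(p)}$, so the term $Z(T')$ coming from coefficient $k'_{j^\star}$ is already present (as a summand of some $Z(T'')$ with $T'' \in \calT^i_Q$, obtained by replacing the wedge $W$ by the bottom nonterminal $A$ in the appropriate copy, which strictly decreases depth below $Q+1$). Hence~\eqref{eq:absorb} holds and the absorption claim follows. I expect the main obstacle to be this last step: carefully arguing that reducing the coefficient of one particular basis vector by one corresponds to an actual parse tree of strictly smaller depth lying in $\calT^i_Q$ (using reversibility of $\mathcal P$ and the depth bound in property~2(d)), and that the algebra of summing over all coefficient values genuinely produces $u^{(k'_{j^\star})}$ rather than something only bounded by it — the commutativity of $\bm S$ is what makes this clean, but the bookkeeping of which trees realize which coefficient vectors is the delicate part. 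The remaining work is routine: verifying the arithmetic that $k'_{j^\star} > p$ under the stated $Q$, and that $Q + 1$ rounded up matches the theorem's closed form.
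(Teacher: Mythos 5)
Your proposal follows the paper's high-level architecture: grounded ICO, parse-tree expansion, property~(3) of Theorem~\ref{thm:depth-parikh} to get a linear representation with large 1-norm, Lemma~\ref{lemma:smallrep} to compress to $h$ basis vectors, and pigeonhole to force some $k'_{j^\star} > p$. The single-coordinate stability algebra you then invoke is itself sound: $u^{(k'_{j^\star})}=u^{(p)}$ does imply $u^{(p)}\oplus u^{k'_{j^\star}}=u^{(p)}$, since $u^{(p)}=u^{(k'_{j^\star}-1)}$ as well. But the step where you claim the absorbing terms lie in $\calT^i_Q$ has a genuine gap, and it is not quite the obstacle you flag.

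You only lower the $j^\star$-th coordinate from $k'_{j^\star}$ down to values $\ell\le p$, holding every other coordinate fixed at $k'_j$. The absorbing vectors are therefore $\bm v_0 + \ell\,\bm v'_{j^\star} + \sum_{j\ne j^\star} k'_j\,\bm v'_j$, and property~2(d) only guarantees a parse tree of depth at most $\bigl(1+\ell+\sum_{j\ne j^\star}k'_j\bigr)\cdot n(n+3)/2$. Nothing bounds $\sum_{j\ne j^\star}k'_j$ from above: property~(3) gives a \emph{lower} bound on the 1-norm of the original representation, Lemma~\ref{lemma:smallrep} \emph{preserves} that 1-norm exactly, and the number of wedges peeled off in the proof of property~(3) can far exceed $(q+1)/(n(n+3)/2)$. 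So even with $\kappa_{j^\star}\le p$ the resulting depth bound can still exceed $q$, and the absorbing terms need not lie in $\calT^i_Q$. Your parenthetical claim that removing the wedge $W$ once ``strictly decreases depth below $Q+1$'' is also unjustified — a wedge off the deepest root-to-leaf path does not reduce depth at all.

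The paper closes this gap by capping \emph{all} coordinates at $p$: it forms $\calL'_{\le p}=\{\bm v_0+\sum_j\kappa_j\bm v'_j\ :\ \kappa_j\in[0,p]\ \forall j\}$, whose members have representation 1-norm at most $ph$ and hence parse trees of depth at most $(ph+1)\cdot n(n+3)/2\le q$ by property~2(d); it then proves a multi-coordinate absorption statement (Lemma~\ref{lem:small-enough}) whose proof iterates your one-coordinate stability argument, releasing one coordinate at a time from the range $[0,p]$ to $[0,k_j]$. Combined with the injectivity of $\bm u\mapsto(\text{a tree realizing }\bm u)$, this makes $Z(\calL'_{\le p})$ a subsum of $\bigoplus_{T\in\calT^i_Q}Z(T)$ and the absorption goes through. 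Your proposal needs that iterated, all-coordinates-capped argument to be correct; the one-coordinate version does not suffice.
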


 For a vector $\bm v \in \mathbb{N}^\sigma$, we let $Z(\bm v)$ denote the product corresponding to $\bm v$, i.e. $\prod_{s = 1}^\sigma a_s^{\bm v_s}$, where $a_s$ is the element corresponding to the $s$th entry of the vector. We naturally extend the notation to a vector set $\bm V$, by letting $Z(\bm V) := \bigoplus_{\bm v \in \bm V} Z(\bm v)$.
To prove the theorem, we need the following lemma, which roughly speaking shows that the summation of all products corresponding to vectors in $\calL$ with coefficients up to $p$ doesn't change when added a product corresponding to any other vector in $\calL$. This lemma was proven in \cite{Khamis0PSW22} (See Section 5.2, in particular the proof of Theorem 5.10 in the
journal / ArXiV version of \cite{Khamis0PSW22}) but we include the proof in the appendix for completeness.

\newcommand{\calG}{\mathcal G}
\begin{lmm}
    \label{lem:small-enough}
    Let $\calL$ be a linear set with offset vector $\bm v_0$ and basis vectors $\bm v_1, \bm v_2, \ldots \bm v_m$. Let $\calL_{\leq p} := \{ \bm v_0 + \kappa_1 \bm v_1 + \kappa_2 \bm v_2 + \ldots + \kappa_m \bm v_m \; | \; \kappa_i \in [0, p] \forall \ i\}$.
    Consider an arbitrary $\bm w
= \bm v_0 + k_1 \bm v_1 + k_2 \bm v_2 + \ldots + k_m \bm v_m$ where  $(k_1, k_2, \ldots, k_m) \in \mathbb{N}^m$ and $k_i > p$ for some $i$. Then, we have  
$$Z(\calL_{\leq p}) = Z(\calL_{\leq p})  \oplus Z(\bm w)$$
\end{lmm}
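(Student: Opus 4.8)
The plan is to reduce Lemma~\ref{lem:small-enough} to a purely algebraic absorption fact about $p$-stable elements, and then prove that fact by induction on the number of basis vectors. Two elementary consequences of $p$-stability drive everything. First, \emph{extended stability}: if $u$ is $p$-stable then $u^{(q)} = u^{(p)}$ for all $q \ge p$; this follows by induction on $q$ from the identity $u^{(q+1)} = 1 \oplus u \otimes u^{(q)}$. Second, \emph{single-variable absorption}: for every $s \in S$ and every $k \ge p+1$, $s \otimes u^{(p)} \oplus s \otimes u^{k} = s \otimes u^{(p)}$; indeed $u^{(p)} \oplus u^{k} = u^{(k-1)} \oplus u^{k} = u^{(k)} = u^{(p)}$ using extended stability, and we then multiply through by $s$ and distribute.

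Next I would use multiplicativity of $Z$: because $\bm S$ is commutative, $Z(\bm u + \bm u') = Z(\bm u) \otimes Z(\bm u')$ for all $\bm u, \bm u' \in \N^\sigma$, since coordinatewise $a_s^{u_s + u'_s} = a_s^{u_s} \otimes a_s^{u'_s}$. Setting $u_i := Z(\bm v_i)$ and expanding the resulting product of sums by distributivity, $Z(\calL_{\le p}) = Z(\bm v_0) \otimes \bigotimes_{i=1}^m u_i^{(p)}$ and $Z(\bm w) = Z(\bm v_0) \otimes \bigotimes_{i=1}^m u_i^{k_i}$, where some $k_j \ge p+1$. (Here $\calL_{\le p}$ is enumerated by the coefficient tuples $\kappa \in \{0,\dots,p\}^m$; since semiring addition need not be idempotent this is the reading that matches the sum over parse trees in the eventual application, but in any case the absorption below is robust to distinct tuples that collide to the same vector.) Cancelling the common factor $Z(\bm v_0)$, it remains to show: for every $s \in S$, every family of $p$-stable $u_1,\dots,u_m$, and every $k_1,\dots,k_m \in \N$ with some $k_j \ge p+1$,
$$s \otimes \bigotimes_{i=1}^{m} u_i^{(p)} \;\oplus\; s \otimes \bigotimes_{i=1}^{m} u_i^{k_i} \;=\; s \otimes \bigotimes_{i=1}^{m} u_i^{(p)}.$$

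I would prove this identity by induction on $m$, the base case $m=1$ being single-variable absorption. For $m \ge 2$ I would peel off the last factor $u_m$ and split into cases. If $k_m \le p$, then $u_m^{k_m}$ is literally one of the summands of $u_m^{(p)}$, so writing $u_m^{(p)} = u_m^{k_m} \oplus R$ and distributing over the $u_m$-factor reduces the claim to the induction hypothesis applied to $u_1,\dots,u_{m-1}$ with multiplier $s \otimes u_m^{k_m}$ (the witnessing large coefficient must then lie among the first $m-1$). If $k_m \ge p+1$, then one reaches the target either by using that $\bigotimes_{i<m} u_i^{k_i}$ is a summand of $\bigotimes_{i<m} u_i^{(p)}$ (when all $k_i \le p$ for $i<m$) or by applying the induction hypothesis to $u_1,\dots,u_{m-1}$ (when one of them carries a large coefficient); in both situations one rearranges so that the inner absorption $s^{*} \otimes u_m^{(p)} \oplus s^{*} \otimes u_m^{k_m} = s^{*}\otimes u_m^{(p)}$ from single-variable absorption applies with the appropriate multiplier $s^{*}$. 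Taking $s = Z(\bm v_0)$ then yields Lemma~\ref{lem:small-enough}.

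The step I expect to be the main obstacle is exactly this bookkeeping in the inductive case: the peeled variable $u_m$ may or may not be the one witnessing $k_j \ge p+1$, and the sub-cases (whether the first $m-1$ exponents are all $\le p$ or not) must be organized so that each reduces cleanly to either the induction hypothesis or single-variable absorption, always with the correct multiplier in front. A related point to keep in mind is that $\oplus$ need not be idempotent in a $p$-stable semiring, so one genuinely has to manufacture the absorption from $u^{(p)} = u^{(p+1)}$ rather than argue loosely that $Z(\bm w)$ "already occurs" among the summands of $Z(\calL_{\le p})$.
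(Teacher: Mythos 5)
Your proposal is correct and is essentially the paper's own argument, reorganized. The paper proves the lemma by defining an explicit chain of coefficient sets $G_0 \subseteq G_1 \subseteq \cdots \subseteq G_{m'}$ (relaxing one large exponent at a time from $[0,p]$ to $[0,k_i]$) and shows $Z(G_j) = Z(G_{j+1})$ by factoring out the $j{+}1$st coordinate and invoking single-variable $p$-stability ($z_{j+1}^{(p)} = z_{j+1}^{(k_{j+1})}$); your version packages the identical peeling as an induction on $m$ with single-variable absorption as the base, after first factoring $Z(\calL_{\le p})$ multiplicatively. The core mechanism --- isolate one coordinate, absorb the oversized power via $u^{(p)} = u^{(p+1)} = \cdots$, then reduce to fewer coordinates --- is the same in both, and your caveat about enumerating $\calL_{\le p}$ by coefficient tuples rather than by the (possibly colliding) vectors they represent applies equally to the paper's $G_j$'s, so it is not a point of divergence.
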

We now have all tools to prove Theorem~\ref{thm:main}.  Consider an arbitrary IDB variable $X_r$ and let $L$ be the CFL associated with this variable.
Let $\mathcal M$ be a semi-linear set that satisfies  the properties stated in Theorem~\ref{thm:depth-parikh}.
Let $\cT^r_{q}$ denote the collection of the parse trees of depth at most $q$ starting with $X_r$.  Our goal is to show:
\begin{equation}
    \label{eqn:final-goal}
    f_r^{(q)}(\bm 0) = f_r^{(q+1)}(\bm 0)
\end{equation}
where
$$f_r^{(q)}(\bm 0) =  \bigoplus_{T \in \calT^r_q} Z(T), \textnormal{ and}$$
\begin{equation}
    \label{eqn:final-bound}
   q := \lceil p n(n+3) \cdot (\sigma (n(n+3) / 2) \lg (\lambda+1) + 4\sigma \lg \sigma + 1) \rceil
\end{equation}

Consider an arbitrary $T \in \calT^r_{q+1} \setminus \calT^r_{q}$. By Theorem~\ref{thm:depth-parikh} (3), $Y(T)$ has a linear representation
$$\Psi(Y(T)) = \bm v_0+ k_1 \bm v_1 + k_2 \bm v_2 + \ldots + k_m \bm v_m$$
within some $\mathcal L$ in $\mathcal M$ such that $1 + k_1 + k_2 + \ldots + k_m > q / (n(n+3)/2)$. Thus, we have $k := k_1 + k_2 + \ldots + k_m > ph$ where
$$h := 2(\sigma (n(n+3) / 2) \lg (\lambda +1) + 4\sigma \lg \sigma)$$
By Lemma~\ref{lemma:smallrep},  we can find a linear representation
$\Psi(Y(T))= \bm v_0+ k'_1 \bm v'_1 + k'_2 \bm v_2 + \ldots + k'_h \bm v'_h$,
where $k'_1 + k'_2 + \ldots + k'_h > ph$. By the pigeonhole's principle, we have that $k'_j > p$ for some $j$.

Let $\calL'$ be the subset of $\calL$ that only consists of basis vectors $\bm v'_1, \bm v'_2,  \ldots, \bm v'_h$ together with offset vector $\bm v_0$. Then, by Lemma~\ref{lem:small-enough}, we have
$$\bigoplus_{\bm u \in \calL'_{\leq p}}   Z(\bm u) = \bigoplus_{\bm u \in \calL'_{\leq p}}   Z(\bm u) \oplus Z(T),$$
where we used $\bigoplus_{\bm u \in \calL'_{\leq p}} Z(\bm u)  = Z(\calL'_{\leq p})$, which is the case by definition.
To complete the proof of Theorem~\ref{thm:main}, it is sufficient to show
\begin{equation}
    \label{eqn:final-subset}
\calL'_{\leq p} \subseteq \{ \Psi(Y(T)) \; | \; T  \in \calT_q^r\}
\end{equation}
 To see this consider any $\bm v = \bm v_0 + \kappa_1 \bm v_1 + \ldots + \kappa_h \bm v_h \in \calL'_{\leq p}$. By definition of $\calL'_{\leq p}$, $\kappa_i \leq p$ for all $i \in [h]$. Then, thanks to Theorem~\ref{thm:depth-parikh} property (2)(d), we know that there is a word $w \in L$ with $\Psi(w) = \bm v$ such that $w$ is generated by a parse tree $T'$ of depth at most $(ph +1)n(n+3) / 2 \leq q$. Thus, it must be the case that $\bm  v = Z(T') \in \{ Z(T) \; | \; T  \in \calT_q^r\}$. This establishes Eqn. (\ref{eqn:final-subset}) as desired, and therefore we have proven Theorem~\ref{thm:main}.
}

\section{Conclusion}
\label{sec:conclusions}
{
This paper considers the convergence of recursive $\Name$ programs using natural iterative evaluation over
the semirings where convergence is not program dependent, namely the stable commutative semirings.
Previously the best-known bound on convergence time was exponential in  the output size.
Our main contribution is to show that in fact the time to convergence can be bounded by a polynomial
in the natural parameters, such as the output size.  One consequence of this result is a better
understanding of how much worse the time to convergence can be for general $\Name$ programs than
linear $\Name$ programs. One reasonable interpretation of our results is that the worst-case time to convergence
for general $\Name$ programs is not too much worse than the worst-case time to convergence for linear
$\Name$ programs, which was a bit surprising to us given that generally one doesn't
expect algorithmic convergence bounds for  non-linear optimization to be competitive with the
bounds for linear optimization.

There are several natural directions for followup research.
While essentially tight bounds are known for convergence time for linear $\Name$ programs,
we do not establish the tightness of our bound. So one natural research direction is
to determine tight bounds on the convergence rate for general $\Name$ programs.
Another natural research direction is to  show some sort of bounds on  convergence time
over non-stable semirings. Note that such bounds would have to be program-dependent.
Another natural research direction would be to develop other algorithms for evaluating
$\Name$ programs and analyze their convergence bounds.}

\bibliographystyle{siam}
\bibliography{main}

\appendix

\section{Omitted Proof}

\begin{proof}[Proof of Lemma~\ref{lem:small-enough}]
Assume wlog that $k_1,  \ldots, k_{m'} > p$ and $k_{m'+1}, \ldots, k_{m} \leq p$.
%
Let $G_j := \{\bm \kappa = (\kappa_1, \ldots, \kappa_m \; | \; \kappa_i \in [0, k_i] \ \forall i \in [0, j] \cup [m'+1, m] \textnormal{ and }
\kappa_i \in [0, p] \ \forall i \in [j+1, m'] \}$ for all $j \in [0, m']$.
Note that to prove the lemma it suffices to show
$$Z(G_0) = Z(G_0) \oplus Z(\bm w)$$
because $\{\bm v_0 + \kappa_1 \bm v_1 + \ldots + \kappa_m \bm v_m \; | \; \bm \kappa \in G_0\} \subseteq \calL_{\leq p}$. We are going to establish
\begin{align}
    \label{eqn:Gs}
Z(G_0) = Z(G_1) = \ldots = Z(G_{m'})
\end{align}
and
\begin{align}
    \label{eqn:Gs2}
Z(G_{j}) \oplus Z(G_{j+1} \setminus G_{j})  = Z(G_{j}) \; \forall j \in [0, m'-1]
\end{align}

Indeed if we have them, $$Z(G_0) \oplus Z(\bm w) = Z(G_{m'-1}) \oplus Z(\bm w) = Z(G_{m'-1}) = Z(G_0),$$ as desired, since $(k_1, k_2, \ldots, k_m) \in G_{m'} \setminus G_{m'-1}$ and
$\bm w = \bm v_0 + k_1 \bm v_1 + \ldots + k_m \bm v_m$.

It now remains to show Eqn. ~(\ref{eqn:Gs}) and (\ref{eqn:Gs2}).
Consider a fixed  $j \in [0, m'-1]$.
Consider an arbitrary $\bm \kappa \in G_{j}$. Let $\bm \kappa'(q)$ be $\bm \kappa$ with $\bm \kappa_{j+1}$ ($j+1$-th coordinate of $\bm \kappa$) replaced with $q$. Let $\bm v(\bm \kappa'(q)) := \bm v_0 + \kappa'_1(q) \bm v_1 + \ldots + \kappa'_m(q) \bm v_m
$ be the vector represented by the linear representation $\bm \kappa'(q)$.
Let $z_i := Z(\bm v_i)$.
 Then, we have
\begin{align*}
\bigoplus_{q = 0}^p Z(\bm v(\bm \kappa'(q))) &=
\bigoplus_{q = 0}^p
\left(\prod_{i=1: i\neq j+1}^{m} z_i^{\kappa_i} \right) z_{j+1}^{q} =
\prod_{i=1: i\neq j+1}^{m} z_i^{\kappa_i} \bigoplus_{q = 0}^p
z_{j+1}^{q}
\\
&=
\prod_{i=1: i\neq j+1}^{m} z_i^{\kappa_i}
z_{j+1}^{(p)}
\\
&= \prod_{i=1: i\neq j+1}^{m} z_i^{\kappa_i}
z_{j+1}^{(k_{j+1})}   \quad \mbox{[$p$-stability and $k_{j+1} > p$]}
\\
%
&=
\bigoplus_{q = 0}^p
\left(\prod_{i=1: i\neq j+1}^{m} z_i^{\kappa_i}\right) z_{j+1}^{q} +
\bigoplus_{q = p+1}^{k_{j+1}}
\left(\prod_{i=1: i\neq j+1}^{m} z_i^{\kappa_i} \right) z_{j+1}^{q} \\
&=
\bigoplus_{q = 0}^p Z(\bm v(\bm \kappa'(q))) + \bigoplus_{q = p+1}^{k_{j+1}} Z(\bm v(\bm \kappa'(q)))
\end{align*}

%
Since $\bigcup_{q = 0}^p \bm v(\bm \kappa'(q)) \subseteq G_j$, and any vector in  $G_{j+1} \setminus G_j$ is of the form of $\bm \kappa'(q)$ for some $q \in [p+1, k_{j+1}]$ for some $\bm \kappa \in G_j$,
we have $Z(G_j) = Z(G_{j+1})$. In other words, we showed that any product corresponding to $G_{j+1} \setminus G_j$ is subsumed by some $p+1$ products in $Z(G_j)$ using the $p$-stability. For the same reason, we have Eqn.~(\ref{eqn:Gs2}).
%
%
\end{proof}

\end{document}